\newcommand\stackequal[2]{%
	\mathrel{\stackunder[2pt]{\stackon[4pt]{$\gtrless$}{$\scriptscriptstyle#1$}}{%
			$\scriptscriptstyle#2$}}}
\newtheorem{proposition}{Proposition}
\newtheorem{lemma}{Lemma}
\newtheorem{theorem}{Theorem}
\newtheorem{corollary}{Corollary}
\newtheorem{remark}{Remark}
\begin{document}

\title{Distributed Rate Control in Downlink NOMA Networks with Reliability Constraints}

	\author{
		\IEEEauthorblockN{Onel L. A. López, \IEEEmembership{Student Member, IEEE}, %
			Hirley Alves, \IEEEmembership{Member, IEEE},		
			Matti Latva-aho, \IEEEmembership{Senior Member, IEEE}
		}		%
		
		\thanks{Authors are with the Centre for Wireless Communications (CWC), University of Oulu, Finland. \{onel.alcarazlopez,hirley.alves,matti.latva-aho\}@oulu.fi}
		
		\thanks{This work is partially supported by Academy of Finland 6Genesis Flagship (Grant n.318927, and  n.303532, n.307492) and by the Finnish Foundation for Technology Promotion.}
	}		
					
\maketitle

\begin{abstract}
Non-orthogonal multiple access (NOMA) has been identified as a promising technology for future wireless systems due to its performance gains in spectral efficiency when compared to conventional orthogonal schemes (OMA).
This gain can be easily translated to an increasing number of served users, but imposes a challenge in the system reliability which is of vital importance for new services and applications of coming cellular systems. To cope with these issues we propose a NOMA rate control strategy that makes use only of topological characteristics of the scenario and the reliability constraint. We attain the necessary conditions so that NOMA overcomes the OMA alternative, while we discuss the optimum allocation strategies for the 2-user NOMA setup when operating with equal rate or maximum sum-rate goals. In such scenario we  show that the user with the largest target error probability times the ratio between the average receive signal power and the average interference power, should be scheduled to be decoded first for optimum performance. We compare numerically the performance of our allocation scheme with its ideal counterpart requiring full CSI at the BSs and infinitely long blocklength, and show how the gap increases as the reliability constraint becomes more stringent. Results also evidence the benefits of NOMA when the co-interference can be efficiently canceled, specially when the goal is to maximize the sum-rate.
\end{abstract}
\begin{IEEEkeywords}
	NOMA, rate control, power allocation, reliability constraints, optimal user ordering.
\end{IEEEkeywords}
\section{Introduction}
Non-orthogonal multiple access (NOMA) has been widely recognized as a promising technology for future wireless systems due to its superior spectral efficiency compared to conventional orthogonal multiple access (OMA) \cite{Saito.2013,Razavi.2017}.
Compared to OMA, NOMA can exploit the channel diversity more efficiently via smart interference management techniques such as successive interference cancellation (SIC) \cite{Ding.2017}, thus, achieving the performance enhancements. A unified framework  for NOMA is provided in \cite{Wang.2018} where the authors review the principles of various NOMA schemes in different domains. As one of the dominant NOMA schemes, power-domain NOMA, where different users are allocated different power levels according to their channel conditions, has become a strong candidate.
These setups are based on the combination of superposition coding at the transmitter side, by allocating the same frequency/time/spatial resource to multiple receivers and multiplexing them on the power domain, and then extracting the intended signals from the composite data using SIC at these receivers \cite{Saito.2013}. Power domain NOMA has received extensive research interests and readers can refer to \cite{Islam.2017} for a comprehensive survey on its recent progress towards 5G systems and beyond. 
Particularly, the work in \cite{Zhang.2018} studies the tradeoff between data rate performance and energy consumption by examining the problem of energy-efficient user scheduling and power optimization in 5G NOMA heterogeneous networks. The scenario of NOMA-based cooperative relay network is explored in \cite{Wan.2018}, while coordinated multipoint transmissions in  downlink NOMA cellular systems are analyzed in \cite{Ali.2018}. 
\subsection{Related Works}\label{IntA}
 Many of the existing studies focus on the system performance when considering a single NOMA cell/cluster \cite{Ding.2014,Ding.2016,Zhang2.2016,Chen.2017,Kang.2017,Xu.2017,Yu.2018}. The performance of NOMA in a cellular downlink scenario with randomly deployed users is investigated in \cite{Ding.2014}. Authors show that NOMA can achieve superior performance in terms of ergodic sum rates; however, that depends critically on the choices of the users' targeted data rates and allocated power. In \cite{Ding.2016}, the impact of user pairing on the performance of two NOMA systems, NOMA with fixed power allocation and cognitive-radio-inspired NOMA, is characterized. An uplink power control scheme is proposed in \cite{Zhang2.2016} in order to achieve diverse arrived power in an uplink NOMA system, while authors analyze its outage performance and the achievable sum data rate. Interestingly, authors in \cite{Chen.2017} investigate the theoretical performance comparison between NOMA and conventional OMA systems from an optimization point of view, while in the context of machine type communications (MTC) authors in \cite{Kang.2017} propose a game theory power control algorithm that prioritizes first the  communication reliability and once the reliability is satisfied it focuses on the power consumption issues. Also,  energy-efficient NOMA design problem for two downlink receivers that have strict reliability and finite blocklength (latency) constraints have been considered in \cite{Xu.2017}, while in \cite{Yu.2018} the physical-layer transmission latency reduction enabled by NOMA in short-packet communications has been highlighted. Finally, the combination of NOMA  with the multiple-input multiple-output (MIMO) technology is explored in \cite{Huang.2018} along with the limitations and future research directions in the area.

Above works do not deal with inter-cell interference, which is a pervasive problem in most of the existing wireless networks and could significantly limit the performance of NOMA deployments. In that sense some recent research works have considered the performance characterization of large-scale NOMA systems \cite{Tabassum.2017,Lopez.2018,Lopez2.2018,Di.2017,Zhang.2016,Liu.2017,Salehi.2019}. Using stochastic geometry, the performance of uplink NOMA in terms of the rate coverage and average achievable rate is characterized in \cite{Tabassum.2017} using Poisson cluster process and considering both perfect and imperfect SIC. Additionally, authors in \cite{Lopez.2018} proposes and evaluate NOMA as a massive-MTC enabler technology in data aggregation networks, while identifying the power constraints on the devices sharing the same channel in order to attain a fair coexistence with purely OMA setups. As a continuation, the relaying phase, where the aggregatted data is forwarded to the base station, is considered in \cite{Lopez2.2018}, and the system performance is investigated in terms of average number of users that are simultaneously served. In the context of vehicle-to-everything (V2X) communications, NOMA has been explored in \cite{Di.2017} where the Base Station (BS) performs semi-persistent scheduling and allocates time-frequency resources in a non-orthogonal manner, while the vehicles autonomously perform distributed power control when broadcasting their safety information to the neighborhood. System coverage and average achievable rate of the $m$-th rank user is derived in \cite{Zhang.2016} assuming that the BSs locations follow a homogeneous Poisson Point Process (PPP) in downlink NOMA, and authors show that NOMA can bring considerable performance gain compared to OMA when SIC error is low. Also, the performance of two-user downlink NOMA is investigated \cite{Liu.2017} in a multi-tier cellular network where the macro cell BSs use the massive MIMO technology and each small cell adopts user pairing to implement two-user NOMA transmission. 
Finally, authors in \cite{Salehi.2019} developed an analytical framework to derive the meta distribution of the Signal-to-Interference Ratio (SIR) in large-scale co-channel uplink and downlink NOMA networks with one NOMA cluster per cell. Notice that such meta distribution tool provides a more precise characterization of the performance of a typical transmission link than the standard success probability, since it is able of characterizing the fraction of users that perform with a given reliability, whereas the standard success probability just characterizes the fraction of users that are in coverage.

On the other hand, NOMA works have usually been aimed at improving the spectral \cite{Ding.2014,Ding.2016,Zhang2.2016,Chen.2017} and energy \cite{Xu.2017,Liu.2017} efficiency of the system, the coverage probability \cite{Tabassum.2017,Zhang.2016,Liu.2017}, and also recently have focused in providing massive connectivity \cite{Lopez.2018,Lopez2.2018} for future Internet of Things (IoT) scenarios. Besides the massive access problem, many IoT use cases with stringent delay and reliability constraints, e.g., ultra reliable MTC (uMTC), have been identified for the coming years and constitute important challenges for emerging massive wireless networks. In that regard, the works \cite{Kang.2017,Xu.2017,Yu.2018} analyze reliability and delay metrics but in a single cell setup, while  authors in \cite{Di.2017} do evaluate a multi-cell setup but in a V2X network. By using stochastic geometry tools, authors in \cite{Salehi.2019} maximize the success probability with and without latency constraints in uplink and downlink NOMA setups with perfect SIC. 
Additionally, optimizing the performance of these networks has usually relied on power control, sub-channel allocation and user selection and/or user pairing mechanisms, while the transmission rate is another degree of freedom that could be also exploited. In fact, the latter has been already explored efficiently in ad-hoc \cite{Kalamkar} and OMA \cite{Lopez3.2018,Lopez.2019} networks with reliability constraints, where the proposed rate allocation schemes are based on easy-to-obtain topological characteristics and on the (Rayleigh) fading statistics\footnote{Specifically, \cite{Kalamkar} focuses on an interference-limited Poisson bipolar network, while in \cite{Lopez3.2018,Lopez.2019} we did similar but for cellular networks with multiple antennas at the receiver side.}. Notice that rate allocation mechanisms are promising since their impact is merely local, in the sense that a node varying its transmission rate does not influence in the performance of any other node\footnote{In case of NOMA, it would affect the nodes associated to the same NOMA cluster, but this is still local and easy to handle.}, thus, when properly designed they are suitable for distributed implementations.
\subsection{Contributions and Organization}\label{IntB}
To the best of our knowledge, a multi-cell downlink NOMA scenario where User Equipments (UEs) operate with reliability constraints and imperfect SIC has not been explored yet, thus, in this work we aim at filling that gap. Our main contributions are
\begin{itemize}
	\item We propose a NOMA rate allocation scheme to be performed by BSs in order to meet the reliability constraints of their associated UEs. The advantage of this scheme is that no instantaneous Channel State Information (CSI) is required at the transmission side, while it only relies on easy-to-get information such as the average receive power from the desired signal and from the interfering BSs at the target UE, and the reliability constraint. Notice that this scheme is closely related to the one we first proposed in \cite{Lopez3.2018} and extended in \cite{Lopez.2019}, but there we just analyze an OMA network with multiple antennas at the receiver, while operating in a non-orthogonal fashion raises many other practical concerns;
	\item We characterize analytically the distribution of the allocated SIR threshold, thus, the distribution of the allocated rate, in Poisson Networks;
	\item We attain the necessary conditions so that NOMA overcomes the OMA alternative, while we discuss the optimum allocation strategies for the 2-UEs NOMA setup when operating with equal rate or maximum sum-rate goals. Specifically, we find the optimum rate and power allocation profile and  show that the UE with the largest value of target error probability times the ratio between the average receive power from the desired signal and the average interference power, should be scheduled to be decoded first for optimum performance;
	\item We compare numerically the performance of our allocation scheme with its ideal counterpart requiring full CSI at the BSs and infinitely long blocklength, and show how the gap increases as the reliability constraint becomes more stringent. Results also evince the benefits of NOMA when the co-interference can be efficiently canceled, specially when the goal is to maximize the sum-rate.
\end{itemize}

Next, Section~\ref{system} establishes the system model and assumptions. In Section \ref{FB} we present the rate allocation scheme, and characterize the distribution of the adopted SIR threshold in large-scale networks. Section~\ref{power} discusses the optimum power allocation strategy in the 2-UEs NOMA setup, while Section \ref{results} presents the numerical results. Finally, Section \ref{conclusions} concludes the paper.

\textit{Notation:} 
 $\mathbb{E}[\!\ \cdot\ \!]$ and $\mathbb{E}[\!\ \cdot\ \!|A]$ denote expectation and expectation conditioned on event $A$, respectively, while $\Pr(B)$ and  $\Pr(B|A)$ are the probability of event $B$, and $\Pr(B)$ conditioned on $A$, respectively.
 $f_X(x)$ and $F_X(x)$ are the Probability Density Function (PDF) and Cumulative Distribution Function (CDF) of random variable (RV) $X$, respectively. $X\sim\mathrm{Exp}(1)$ is an exponential distributed RV with unit mean, e.g., $f_X(x)=\exp(-x)$ and $F_X(x)=1-\exp(-x)$; while $f_{r_i}(x)=2\pi\lambda x\exp({-\pi\lambda x^2})$ denotes the PDF of the nearest neighbor distance in a two-dimensional PPP with density $\lambda$, which is given in \cite{Haenggi.2005}. $d^{m} f(x)/d x^m$ is the $m$-th derivative of function $f(x)$ with respect to $x$, while $\min(x,y)$ is the minimum between the values of $x$ and $y$. Finally, $\mathcal{L}_X(s)$ denotes the Laplace transform of RV $X$, while $\mathcal{L}^{-1}\{\cdot\}(x)$ is the inverse Laplace transform operation.
\section{System Model}\label{system}
Consider a multi-cell downlink cellular network where BSs are spatially distributed
according to a 2-D homogeneous PPP $\Phi$ with density $\lambda$. 
We assume each UE is associated with the closest BS, namely the UEs in the Voronoi cell of a BS are associated with it, resulting in coverage areas as shown in Fig.~\ref{Fig1}.
UEs are placed such that the distance between $\mathrm{UE}_i$ and its associated BS, is $r_i$.
We also consider an interference-limited wireless system given a dense deployment of small cells and hence the impact of noise is neglected throughout the paper\footnote{However, the impact of the noise could easily be incorporated without substantial changes.}. All BSs transmit at the power $P_T$.  We adopt a channel model that comprises standard path-loss with exponent $\alpha$ and Rayleigh fading.
\begin{figure}[t!]
	\includegraphics[width=0.48\textwidth,center]{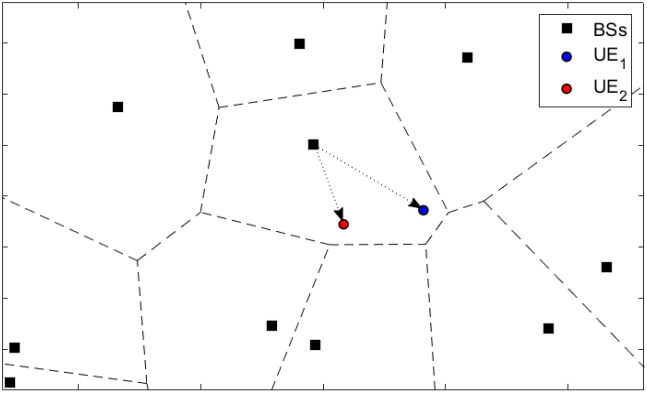}
	\caption{Illustration of the system model for $M=2$.}		
	\label{Fig1}
\end{figure}
 
A NOMA group, $\mathcal{M}$, consists of $M$ UEs, $\mathrm{UE}_i,\ i=1,2,...,M$, and notice that an illustration of the system model for $M=2$ is shown in Fig.~\ref{Fig1}. Each $\mathrm{UE}_i$ tries to decode the interfering NOMA signals destined to $\mathrm{UE}_m$, $m\le i$, and then remove the interfering NOMA signals from the received signal, in a successive manner. Assume the transmission power of signal for $\mathrm{UE}_i$ is $P_i$ with $\sum_{i\in \mathcal{M}}P_i=P_T$. Our system model is similar to \cite{Zhang.2016}, although, we do not establish the decoding order based on the channel gains normalized by the interference since such approach requires full knowledge not only of the channel coefficients and path-loss but also of the interference experienced each time at each $\mathrm{UE}$, which is very difficult to obtain. Instead, the BS allocates the transmit power such that $P_1\ge...\ge P_M$ for guaranteeing the desired decoding order\footnote{For instance, if two $\mathrm{UE}s$, named $\mathrm{UE}_A$ and $\mathrm{UE}_B$, are served such that the BS allocates $P_B>P_A$, then $\mathrm{UE}_B$ and $\mathrm{UE}_A$ signals are decoded in that order at the $\mathrm{UE}s$ since for each of them the signal associated with $\mathrm{UE}_B$ arrives always with greater power. Thus, $\mathrm{UE}_B=\mathrm{UE}_1$ and $\mathrm{UE}_A=\mathrm{UE}_2$.} and some criteria for feasibility and optimum performance are provided in Section~\ref{power}. Finally, CSI is only required at the UEs and not at transmitter side.
\subsection{Signal Model} 
We denote the signal intended to $\mathrm{UE}_i$ as $x_i$ where $\mathbb{E}[|x_i|^2]=1$. According to NOMA principle \cite{Razavi.2017}, the transmitted signal at BS is coded as the composite signal from all the $\mathrm{UE}$s sharing the orthogonal resource,
\begin{align}
x=\sum\limits_{i=1}^M\sqrt{P_i}x_i.\label{signal}
\end{align}
Thus, in an interference limited system the received signal at $\mathrm{UE}_i$ can be represented as
\begin{align}
y_i=\sqrt{h_ir_i^{-\alpha}}x+I_i,\label{y}
\end{align}
where $h_i\sim\mathrm{Exp}(1)$ is the power channel gain coefficient, and
\begin{align}
I_i=\sum_{j\in\Phi\backslash \{b_0\}}g_jr_{j,i}^{-\alpha}P_T \label{Ii}
\end{align}
is the inter-cell interference, which is the sum of the power received from all BSs excluding the serving BS denoted as $b_0$. In \eqref{Ii}, $g_j\sim \mathrm{Exp}(1)$ and $r_{j,i}$ are the Rayleigh fading power coefficient of interfering channel, and transmission distance from interfering BS $j$ to $\mathrm{UE}_i$, respectively.
\subsection{SIR after SIC}
Instead of the real SIR at the receiving antenna of a $\mathrm{UE}$, we are more interested in the SIR after SIC which is directly related to the performance of each particular $\mathrm{UE}$. Notice that $\mathrm{UE}_1$ does not need to perform interference cancellation and directly treats $x_j, j\ge 2$, as interference since it comes the first in the decoding order. On the other hand, $\mathrm{UE}_2$ first decodes $x_1$ and removes it from the received composite signal $y_2$, based on which $\mathrm{UE}_2$ can further decode $x_2$. Thus, $\mathrm{UE}_m$ first decodes signals $x_j,j<m$, and remove them from the received composite signal $y_m$, based on which $\mathrm{UE}_m$ can further decode $x_m$. Assuming successful decoding but with error propagation \cite{Sun.2016} given by the parameter $\mu\in[0,\ 1]$ measuring the imperfection of SIC, the NOMA SIR of the $i-$th signal at $\mathrm{UE}_i$ can be expressed as
\begin{align}\label{eq4}
\mathrm{SIR}_i=\frac{h_ir_i^{-\alpha}P_i}{h_ir_i^{-\alpha}\Big[\mu\sum\limits_{j=1}^{i-1}P_j+\sum\limits_{j=i+1}^{M}P_j\Big]+I_i}.
\end{align}
Notice that $\sum\limits_{j=1}^{i-1}P_j=0$ and $\sum\limits_{j=i+1}^{M}P_j=0$ for $i=1$ and $i=M$, respectively.
\section{Distributed rate control under reliability constraint}\label{FB}
We are interested in finding the SIR threshold for each link $i\in\mathcal{M}$, $\gamma_i$, such that the conditional link success probability, $\mathbb{P}(\mathrm{SIR}_i>\gamma_i|\Phi)$ \cite{Kalamkar}, is equal to $1-\epsilon_i$, then, the transmitter decides on its rate as $R_i=\log_2(1 + \gamma_i)$. Allocating the rate in such way guarantees that all the $\mathrm{UE}$s meet their reliability requirements since $\mathbb{P}(\mathrm{SIR}_i>\gamma_i|\Phi)$ is defined for each network realization. Notice that for practical feasibility the allocation should rely on information easy to obtain at the BS; for instance, allocating the rate based on the path-loss experienced by each of the interfering signals is not suitable.
\begin{lemma}\label{the1}
	For a given realization of $\Phi$, the maximum SIR threshold $\gamma_i$ that guarantees the required reliability for the $i-$th link is 
	\begin{align}
	\gamma_i=\frac{\varphi^*_iP_i}{P_T+\varphi^*_i\Big(\mu\sum\limits_{j=1}^{i-1}P_j+\sum\limits_{j=i+1}^{M}P_j\Big)},\label{gami}
	\end{align}
	where $\varphi^*_i$ is the unique real positive solution of	
	\begin{align}
    \prod_{j\in \Phi\backslash \{b_0\}}(1+\varphi_i r_i^{\alpha}r_{j,i}^{-\alpha})=\frac{1}{1-\epsilon_i}.\label{eqP}
    \end{align}	
\end{lemma}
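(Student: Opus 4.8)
\emph{Proof proposal.} The plan is to evaluate the conditional success probability $\mathbb{P}(\mathrm{SIR}_i>\gamma_i|\Phi)$ in closed form, set it equal to $1-\epsilon_i$, and invert the resulting relation for $\gamma_i$. First I would abbreviate the same-cell interference weight as $A_i=\mu\sum_{j=1}^{i-1}P_j+\sum_{j=i+1}^{M}P_j$ and rewrite the event $\mathrm{SIR}_i>\gamma_i$ using \eqref{eq4}. Cross-multiplying and grouping the terms carrying $h_i$ gives $h_ir_i^{-\alpha}(P_i-\gamma_iA_i)>\gamma_iI_i$. Provided the allocation is feasible, i.e. $P_i>\gamma_iA_i$ (otherwise the left side is non-positive and the target cannot be met), this is equivalent to $h_i>sI_i$ with $s=\gamma_ir_i^{\alpha}/(P_i-\gamma_iA_i)$.

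Since $h_i\sim\mathrm{Exp}(1)$ is independent of the interference, and conditioning on $\Phi$ fixes all distances while leaving the fading random, I would integrate out $h_i$ to obtain $\mathbb{P}(\mathrm{SIR}_i>\gamma_i|\Phi)=\mathbb{E}[\exp(-sI_i)|\Phi]=\mathcal{L}_{I_i}(s|\Phi)$, using $\Pr(h_i>x)=e^{-x}$. By \eqref{Ii} and the independence of the i.i.d. exponential interfering gains $g_j$, this Laplace transform factorizes over the interferers, and each factor equals $1/(1+sP_Tr_{j,i}^{-\alpha})$ because $\mathbb{E}[e^{-ag}]=(1+a)^{-1}$ for $g\sim\mathrm{Exp}(1)$. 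Hence $\mathbb{P}(\mathrm{SIR}_i>\gamma_i|\Phi)=\prod_{j\in\Phi\backslash\{b_0\}}(1+sP_Tr_{j,i}^{-\alpha})^{-1}$.

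Equating this with $1-\epsilon_i$ and taking reciprocals yields $\prod_{j}(1+sP_Tr_{j,i}^{-\alpha})=1/(1-\epsilon_i)$. The final step is a change of variable: setting $\varphi_i=sP_T/r_i^{\alpha}=\gamma_iP_T/(P_i-\gamma_iA_i)$ rewrites $sP_Tr_{j,i}^{-\alpha}$ as $\varphi_ir_i^{\alpha}r_{j,i}^{-\alpha}$, which is precisely \eqref{eqP}. Solving the linear relation $\varphi_i(P_i-\gamma_iA_i)=\gamma_iP_T$ for $\gamma_i$ then gives \eqref{gami}, with $\varphi_i=\varphi_i^*$ the root of \eqref{eqP}.

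The computation above is routine for a Rayleigh/PPP setting; the steps requiring care, and thus the main obstacle, are the maximality and uniqueness claims together with the feasibility bookkeeping. For uniqueness I would observe that the left-hand side of \eqref{eqP} is a finite product of factors each strictly increasing in $\varphi_i$, equal to $1$ at $\varphi_i=0$ and diverging as $\varphi_i\to\infty$; by strict monotonicity and the intermediate value theorem there is exactly one positive root $\varphi_i^*$ whenever $1/(1-\epsilon_i)>1$, i.e. $\epsilon_i\in(0,1)$. Since $s$ and hence $\varphi_i$ increase with $\gamma_i$ while the product increases with $\varphi_i$, the conditional success probability is strictly decreasing in $\gamma_i$; therefore the equality $\mathbb{P}(\mathrm{SIR}_i>\gamma_i|\Phi)=1-\epsilon_i$ pins down the largest admissible threshold, and the constraint $P_i>\gamma_iA_i$ is automatically enforced by requiring $\varphi_i^*>0$.
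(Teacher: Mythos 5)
Your proposal is correct and follows essentially the same route as the paper's Appendix~A: condition on $\Phi$, integrate out $h_i\sim\mathrm{Exp}(1)$ to get the Laplace transform of $I_i$, factorize it over the i.i.d.\ exponential gains $g_j$ into $\prod_j(1+\varphi_i r_i^{\alpha}r_{j,i}^{-\alpha})^{-1}$ with the substitution $\varphi_i=\gamma_i P_T/(P_i-\gamma_i A_i)$, equate to $1-\epsilon_i$, and invert for $\gamma_i$; your monotonicity/intermediate-value argument for uniqueness, and the explicit maximality and feasibility bookkeeping ($P_i>\gamma_i A_i$ following from $\varphi_i^*>0$), are the same ideas the paper states more tersely. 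One small nit: over a PPP the product in \eqref{eqP} has infinitely many factors, not finitely many, so your existence/divergence step should appeal to the a.s.\ convergence of $\sum_{j}r_{j,i}^{-\alpha}$ for $\alpha>2$ -- a point the paper itself glosses over.
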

\begin{proof}
See Appendix~\ref{App_A}. \phantom\qedhere
\end{proof}
\begin{remark}\label{re1}
	If $M=1$ we have that $\gamma_1=\varphi_1^*$, thus, $\varphi_i^*$ represents the SIR threshold that would be required for $\mathrm{UE}_i$ if it was operating alone in the channel, hence $\varphi_i^*\ge\gamma_i$ holds always. Additionally,
	\begin{align}
	\frac{d \gamma_i}{d \varphi_i^*}=\frac{P_iP_T}{\bigg(P_T+\varphi^*_i\Big(\mu\sum\limits_{j=1}^{i-1}P_j+\sum\limits_{j=i+1}^{M}P_j\Big)\bigg)^2}>0,\label{d2}
	\end{align}
	therefore, $\gamma_i$ is an increasing function of $\varphi_i^*$.
\end{remark}

For practical values of $\epsilon_i$ we next propose a very tight and simple approximation for $\varphi_i^*$ that is suitable when computing \eqref{gami} for rate allocation.
\begin{theorem}\label{the2}
	$\varphi_i^*$ approximates accurately to
	\begin{align}
	\varphi_i^*&\approx \frac{r_i^{-\alpha}}{\sum_{j\in\Phi\backslash\{b_0\}}r_{j,i}^{-\alpha}}\epsilon_i\label{app}
	\end{align}
	when $\epsilon_i\le 10^{-1}$.
\end{theorem}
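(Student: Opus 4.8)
The plan is to turn the transcendental product equation \eqref{eqP} into a sum by taking logarithms, and then to exploit the fact that a small target error probability $\epsilon_i$ forces the root $\varphi_i^*$ to be small, so that both sides can be linearized. Writing $a_j := r_i^{\alpha} r_{j,i}^{-\alpha} > 0$ for each interferer $j \in \Phi\backslash\{b_0\}$, equation \eqref{eqP} reads $\sum_j \ln(1+\varphi_i^* a_j) = -\ln(1-\epsilon_i)$. Since $-\ln(1-\epsilon_i) = \epsilon_i + \epsilon_i^2/2 + \cdots$, the right-hand side equals $\epsilon_i$ up to a second-order term, so the first move is to replace it by $\epsilon_i$.

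For the left-hand side I would first argue that every argument $\varphi_i^* a_j$ is small. Using $\ln(1+x) \le x$ for $x \ge 0$ gives $-\ln(1-\epsilon_i) \le \varphi_i^* \sum_j a_j$, hence $\varphi_i^* \ge -\ln(1-\epsilon_i)/\sum_j a_j \ge \epsilon_i/\sum_j a_j$, which already shows that the candidate in \eqref{app} is a lower bound on $\varphi_i^*$. Combining this with $a_k \le \sum_j a_j$ bounds each $\varphi_i^* a_k$ on the order of $\epsilon_i$, so all logarithms are evaluated near the origin. Linearizing $\ln(1+\varphi_i^* a_j) \approx \varphi_i^* a_j$ and summing then yields $\varphi_i^* \sum_j a_j \approx \epsilon_i$, i.e. $\varphi_i^* \approx \epsilon_i/\sum_j a_j = r_i^{-\alpha}\epsilon_i/\sum_j r_{j,i}^{-\alpha}$, which is exactly \eqref{app}.

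To certify that the approximation is tight for $\epsilon_i \le 10^{-1}$ I would sandwich the true root. The lower bound $\varphi_i^* \ge \epsilon_i/\sum_j a_j$ is already in hand; for the matching upper bound I would use the elementary inequality $\ln(1+x) \ge x - x^2/2$, valid for $x \ge 0$, to get $-\ln(1-\epsilon_i) \ge \varphi_i^* \sum_j a_j - \tfrac{1}{2}(\varphi_i^*)^2\sum_j a_j^2$. Solving this quadratic inequality shows that $\varphi_i^*$ exceeds \eqref{app} only by a relative amount of order $\epsilon_i\bigl(1 + \sum_j a_j^2/(\sum_j a_j)^2\bigr)$, collecting the curvature of the logarithm and the gap between $-\ln(1-\epsilon_i)$ and $\epsilon_i$. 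The main obstacle is controlling the ratio $\sum_j a_j^2/(\sum_j a_j)^2$: it is realization-dependent and would approach unity if a single interferer dominated, but it never exceeds one, and in a dense PPP the interference is spread over many base stations so it stays well below one. I would therefore finish either by the per-realization quadratic bound above or, for a cleaner statement, by reporting the residual as $O(\epsilon_i)$ and validating the tightness numerically.
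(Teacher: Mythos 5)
Your route is genuinely different from the paper's. The paper (Appendix~\ref{App_B}) never takes logarithms: it applies the AM--GM inequality to the product in \eqref{eqP}, replaces the product by $\bigl[1+\tfrac{\varphi_i r_i^{\alpha}}{n}\sum_j r_{j,i}^{-\alpha}\bigr]^{n}$, solves to get $\varphi_i^*\approx \tfrac{r_i^{-\alpha}}{\sum_j r_{j,i}^{-\alpha}}f(n,\epsilon_i)$ with $f(n,\epsilon_i)=n\bigl((1-\epsilon_i)^{-1/n}-1\bigr)$, and then shows $f$ is decreasing in $n$ and squeezed between $f(1,\epsilon_i)=\epsilon_i/(1-\epsilon_i)$ and $\lim_{n\to\infty}f(n,\epsilon_i)=-\ln(1-\epsilon_i)$, both of which are $\approx\epsilon_i$ when $\epsilon_i\le 10^{-1}$ (validated in Fig.~\ref{Fig9}). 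Both proofs ultimately pivot on the same fact --- that $\epsilon_i$, $\epsilon_i/(1-\epsilon_i)$ and $-\ln(1-\epsilon_i)$ agree to first order --- but your log-linearization with a two-sided Taylor sandwich buys something the paper does not have: an explicit per-realization relative error of order $\epsilon_i$, uniform over deployments since $\sum_j a_j^2/(\sum_j a_j)^2\le 1$ always, whereas the paper's tightness argument (``all factors are close to one, so AM--GM is nearly an equality'') is qualitative. Your observation that \eqref{app} is an exact \emph{lower} bound on $\varphi_i^*$ is also correct and not stated in the paper.

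One step as written is inverted, though easily repaired. You justify smallness of the arguments $\varphi_i^* a_k$ by combining the lower bound $\varphi_i^*\ge\epsilon_i/\sum_j a_j$ with $a_k\le\sum_j a_j$; that only yields $\varphi_i^* a_k\ge \epsilon_i a_k/\sum_j a_j$, a bound in the wrong direction. The correct one-liner: every factor of the product in \eqref{eqP} is at least one, so each factor is at most the whole product, giving $1+\varphi_i^* a_k\le(1-\epsilon_i)^{-1}$, i.e.\ $\varphi_i^* a_k\le\epsilon_i/(1-\epsilon_i)$ for every $k$. You also need this upper bound to close your quadratic step: with $A=\sum_j a_j$, $B=\sum_j a_j^2$ and $L=-\ln(1-\epsilon_i)$, the inequality $\varphi_i^*A-\tfrac{1}{2}(\varphi_i^*)^2B\le L$ is satisfied on \emph{two} branches, $\varphi_i^*\le\varphi_-$ and $\varphi_i^*\ge\varphi_+$, where $\varphi_\pm=\bigl(A\pm\sqrt{A^2-2BL}\bigr)/B$ (real since $2BL/A^2\le 2L<1$ for $\epsilon_i\le 10^{-1}$). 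The per-factor bound rules out the large root, because $\varphi_+\ge A/B\ge 1/\max_k a_k$ (using $B\le A\max_k a_k$) while $\varphi_i^*\max_k a_k\le\epsilon_i/(1-\epsilon_i)<1$; after that, $\varphi_-=2L/\bigl(A+\sqrt{A^2-2BL}\bigr)$ delivers your $O(\epsilon_i)$ relative error. With these two lines added, your proof is complete and in fact quantitatively stronger than the paper's.
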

\begin{proof}
See Appendix~\ref{App_B}. \phantom\qedhere
\end{proof}
Notice that our results are not restrictive to the adopted path-loss model. In fact, $r_i ^{-\alpha}/\sum\limits_{j\in\Phi\backslash\{b_0\}}r_{j,i}^{-\alpha}$ can be expressed in a generalized way as the quotient between the average receive power from the desired signal and from the interfering BSs at the given location, and in that aspect lies the main advantage of using \eqref{app}. Fig.~\ref{Fig2}a corroborates the accuracy of \eqref{app}, which is independent of the specific deployment and path-loss exponent. As expected, the less strict the reliability constraint, the larger $\varphi_i^*$, hence the larger $\gamma_i$.
\begin{figure}[t!]
	\includegraphics[width=0.47\textwidth,right]{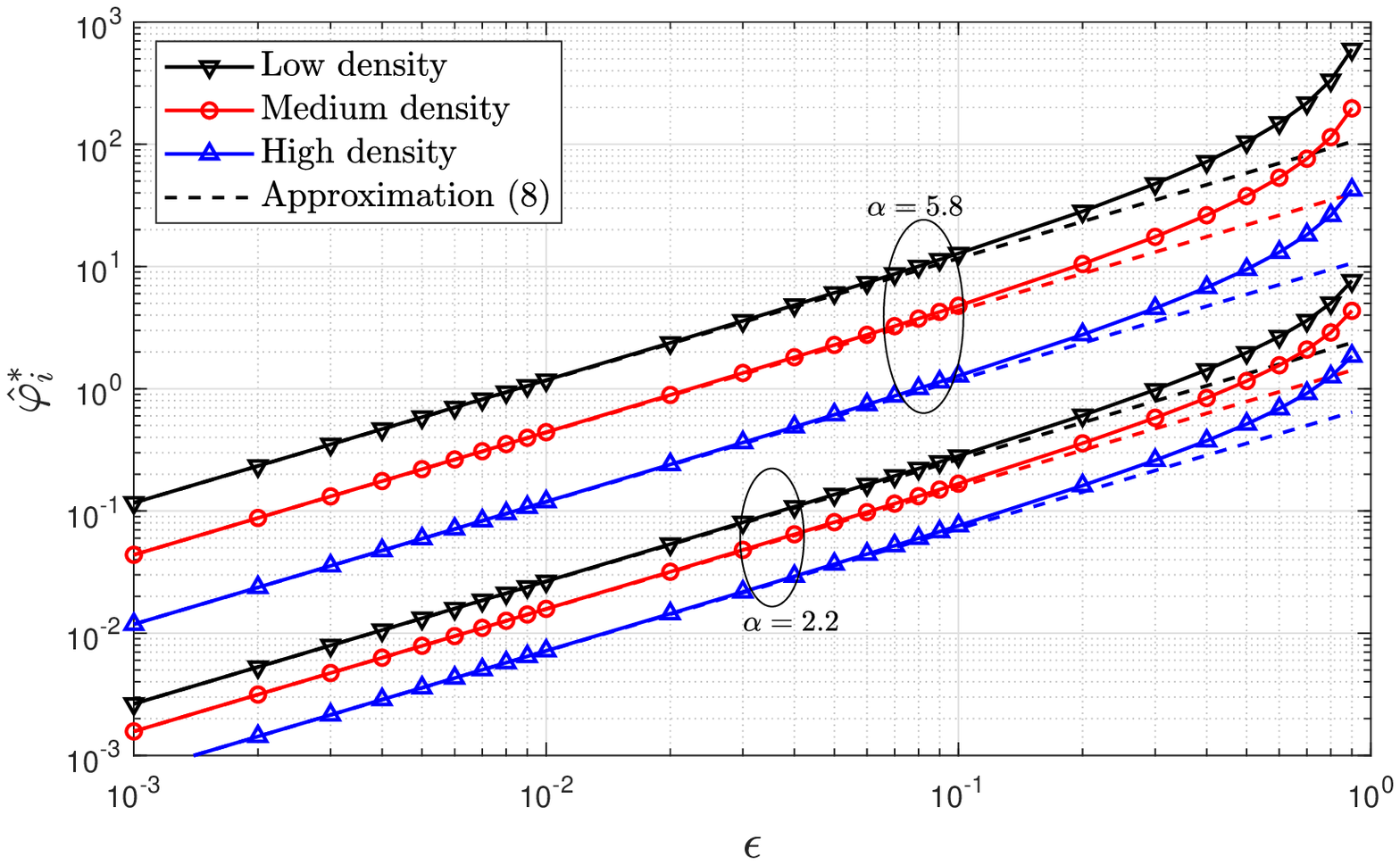}\vspace{2mm}\\ 
    \includegraphics[width=0.47\textwidth,right]{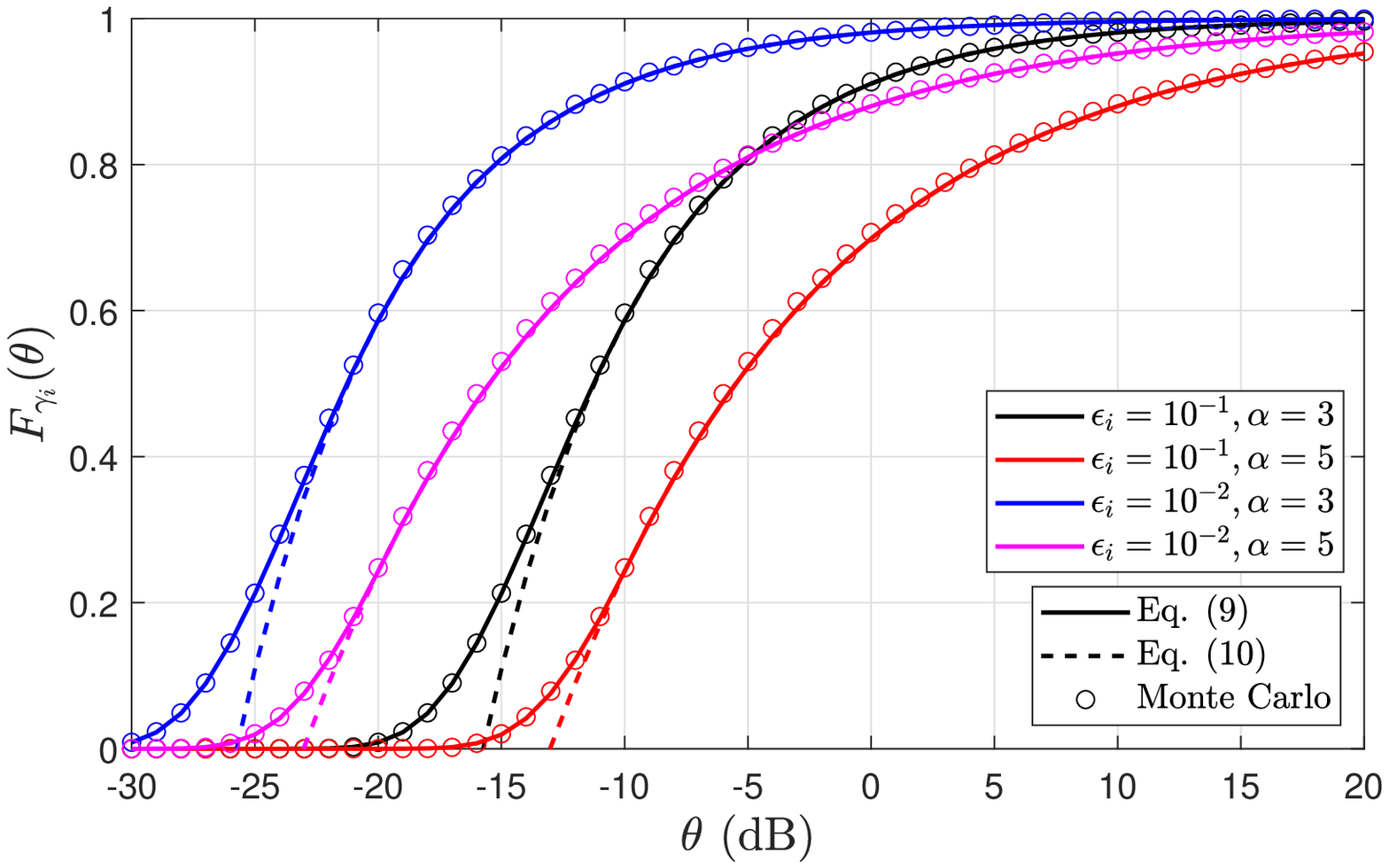}
	\caption{(a) $\varphi_i^*$ as a function of $\epsilon_i$, for high density, $r_{j,i}=40+10j$, medium density, $r_{j,i}=40+20j$ and low density,  $r_{j,i}=40+30j$, example deployments (top). We set $r_i=30$m. (b) $F_{\gamma_i}(\theta)$  for OMA scheme ($P_i=P_T$ and $\gamma_i=\varphi_i^*$) (bottom). Since $F_{\gamma_i}(\theta)$ does not depend on $\lambda$ we choose an arbitrary value of $\lambda=10^{-4}/\mathrm{m}^2$ ($100/\mathrm{km}^2$) for obtaining the Monte Carlo results.}
	\label{Fig2}
\end{figure}
\subsection{Distribution of the SIR threshold}\label{dSIR}
Herein we find the distribution of the SIR threshold allocated to each UE for satisfying its reliability constraint. The distribution is taken with respect to the PPP spatial randomness and the results are useful for characterizing the links' rate performance in general large-scale networks.
\begin{theorem}\label{the3}
	The distribution of the SIR threshold allocated to every $\mathrm{UE}_i$  for satisfying its reliability constraint $\epsilon_i$ according to \eqref{gami} and \eqref{app}, is given by
	\begin{align}
	F_{\gamma_i}(\theta&)=1-\mathcal{L}^{-1}\bigg\{\frac{1}{s\ _1F_1(-\delta,1-\delta,-s)}\bigg\}\big(z_i(\theta)\big), \label{eqth3}\\
	& \mathrel{\mathop{\leqq}^{z_i(\theta)>1}_{z_i(\theta)\le 1}} 1-\mathrm{sinc}(\delta) z_i(\theta)^{\delta}, \label{eqth4}
	\end{align}
    in downlink NOMA networks with PPP distributed BSs. In \eqref{eqth3} and \eqref{eqth4} we use $\delta=2/\alpha$ and
	\begin{align}
	z_i(\theta)&=\left( \frac{P_i}{\theta}-\mu\sum\limits_{j=1}^{i-1}P_j-\sum\limits_{j=i+1}^{M}P_j \right) \frac{\epsilon_i}{P_T}.\label{ztheta}
	\end{align}
\end{theorem}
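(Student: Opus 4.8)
The plan is to reduce the statement to the distribution of a single scale-invariant functional of the PPP and then evaluate its Laplace transform in closed form. First I would invert the relation \eqref{gami}. Writing $A_i:=\mu\sum_{j=1}^{i-1}P_j+\sum_{j=i+1}^{M}P_j$, solving \eqref{gami} for $\varphi_i^*$ at $\gamma_i=\theta$ gives $\varphi_i^*=\theta P_T/(P_i-\theta A_i)$, which is positive precisely on the feasible range $\theta<P_i/A_i$. Since Remark~\ref{re1} shows $\gamma_i$ is strictly increasing in $\varphi_i^*$, the event $\{\gamma_i\le\theta\}$ coincides with $\{\varphi_i^*\le \theta P_T/(P_i-\theta A_i)\}$. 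Substituting the approximation \eqref{app}, i.e. $\varphi_i^*=\epsilon_i/W$ with $W:=r_i^{\alpha}\sum_{j\in\Phi\backslash\{b_0\}}r_{j,i}^{-\alpha}$, and rearranging turns this into $\{W\ge z_i(\theta)\}$ with $z_i(\theta)$ exactly as in \eqref{ztheta}. Hence $F_{\gamma_i}(\theta)=1-F_W\big(z_i(\theta)\big)$, and the whole problem collapses to finding the CDF of $W$.

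Next I would compute the Laplace transform of $W$ over the PPP. Conditioning on the serving distance $r_i$, the interferers form a PPP of intensity $\lambda$ on the exterior of the disk of radius $r_i$, so the probability generating functional yields $\mathbb{E}[e^{-sW}\mid r_i]=\exp(-\lambda r_i^2 C(s))$ after the substitution $x=r_i t$, where $C(s)=2\pi\int_1^\infty(1-e^{-st^{-\alpha}})t\,dt$. Averaging against the nearest-neighbour density $f_{r_i}$ collapses the resulting Gaussian-type integral in $r_i$ and gives $\mathcal{L}_W(s)=\pi/(\pi+C(s))$, which is manifestly independent of $\lambda$ (consistent with the Fig.~\ref{Fig2} caption). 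The key identification is then $1+C(s)/\pi={}_1F_1(-\delta,1-\delta,-s)$: expanding $1-e^{-st^{-\alpha}}$ as a power series and integrating term by term produces $C(s)/\pi=\delta\sum_{k\ge1}(-1)^{k+1}s^k/[k!(k-\delta)]$, and the Pochhammer simplification $(-\delta)_k/(1-\delta)_k=-\delta/(k-\delta)$ matches this term by term with the ${}_1F_1$ series. Therefore $\mathcal{L}_W(s)=1/{}_1F_1(-\delta,1-\delta,-s)$, and since $\int_0^\infty e^{-sw}F_W(w)\,dw=\mathcal{L}_W(s)/s$, inverting gives $F_W(w)=\mathcal{L}^{-1}\{1/(s\,{}_1F_1(-\delta,1-\delta,-s))\}(w)$, which together with the first step yields \eqref{eqth3}.

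For the elementary form \eqref{eqth4} I would extract the leading behaviour of $F_W$ at small argument from the large-$s$ asymptotics of the transform. The confluent hypergeometric function satisfies ${}_1F_1(-\delta,1-\delta,-s)\sim\Gamma(1-\delta)s^{\delta}$ as $s\to\infty$, so $\mathcal{L}_W(s)/s\sim s^{-(1+\delta)}/\Gamma(1-\delta)$; inverting the pure power and applying the reflection formula $\Gamma(1-\delta)\Gamma(1+\delta)=\pi\delta/\sin(\pi\delta)$ gives $F_W(w)\approx w^{\delta}/[\Gamma(1-\delta)\Gamma(1+\delta)]=\mathrm{sinc}(\delta)\,w^{\delta}$. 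Substituting $w=z_i(\theta)$ produces the right-hand side of \eqref{eqth4}, and the two-sided relation follows from this power law being the exact tail of $F_W$ as $w\to0$ while it over/under-estimates the true CDF in the complementary regime $z_i(\theta)\gtrless1$.

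The main obstacle I anticipate is the closed-form evaluation in the second step, namely recognising the shot-noise integral $C(s)$ as a confluent hypergeometric function through the Pochhammer identity and justifying the term-by-term integration, which requires $0<\delta<1$ (i.e. $\alpha>2$) for convergence at the origin. A secondary delicate point is that the Laplace inversion in \eqref{eqth3} has no elementary closed form, which is precisely what motivates the asymptotic approximation \eqref{eqth4}; care is also needed to restrict $\theta$ to the feasible interval $\theta<P_i/A_i$ so that $z_i(\theta)>0$ and $F_W$ is evaluated on its support.
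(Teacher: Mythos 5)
Your derivation of \eqref{eqth3} is correct and is essentially the route the paper itself takes in Appendix~\ref{App_C}: you use the monotone map \eqref{eqGP} to reduce $F_{\gamma_i}(\theta)$ to the CDF of $\Psi=r_i^{\alpha}\sum_{j\in\Phi\backslash\{b_0\}}r_{j,i}^{-\alpha}$ evaluated at $z_i(\theta)$ (your $W$ is exactly the paper's $\Psi$), compute its Laplace transform by the PPP's generating functional over the exterior of the disk of radius $r_i$ averaged against the nearest-neighbour density, and invert $\mathcal{L}_W(s)/s$. The paper packages the middle step as the PGFL of the ``squared relative distance process'' borrowed from \cite{Ganti2.2016}, but after the substitution $y=t^2$ your $\mathcal{L}_W(s)=\pi/(\pi+C(s))$ is literally the paper's $\big(1+\int_1^\infty(1-f(y))\,dy\big)^{-1}$; your explicit term-by-term verification that $1+C(s)/\pi={}_1F_1(-\delta,1-\delta,-s)$ via $(-\delta)_k/(1-\delta)_k=-\delta/(k-\delta)$ supplies a detail the paper asserts without proof, and your observations on $\lambda$-independence and the convergence requirement $\alpha>2$ match the paper's remarks.

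There is, however, a genuine gap in your treatment of \eqref{eqth4}. Tauberian inversion of the large-$s$ asymptotics ${}_1F_1(-\delta,1-\delta,-s)\sim\Gamma(1-\delta)s^{\delta}$ yields only $F_W(w)\sim\mathrm{sinc}(\delta)\,w^{\delta}$ as $w\to 0$, an asymptotic equivalence in the limit; the theorem asserts \emph{exact} equality for every $z_i(\theta)\le 1$ and a bound of definite direction for $z_i(\theta)>1$, and your closing hedge (``over/under-estimates'') concedes that neither follows from your argument. The paper gets both by a different observation: $\varphi_i^*$ in \eqref{app} is $\epsilon_i$ times the SIR of the typical link in a Poisson cellular network \emph{without fading}, so the exact result \cite[Eq.~6]{Zhang.2014} applies, namely $\Pr(1/W>\theta)=\mathrm{sinc}(\delta)\theta^{-\delta}$ holds with equality for all $\theta\ge 1$ (not merely asymptotically) and as an upper bound on the CCDF for $\theta<1$; equivalently $F_W(w)=\mathrm{sinc}(\delta)w^{\delta}$ exactly on $w\le 1$, while for $z_i(\theta)>1$ the power law (which can then be negative) bounds $F_{\gamma_i}$ from \emph{below}, as stated after Fig.~\ref{Fig2}b. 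To close your proof you would need this exact no-fading SIR distribution, or an independent derivation of it; the transform asymptotics alone cannot give equality at any fixed $w>0$ nor resolve the direction of the inequality in the complementary regime.
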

\begin{proof}
    See Appendix~\ref{App_C}. \phantom\qedhere	
\end{proof}

Notice that for $z_i(\theta)>1$ there is no closed-form analytical expression for $F_{\gamma_i}(\theta)$ since it is required computing the Laplace inversion of a nasty function which includes an hypergeometric term, thus, numerical methods are required (see \cite{Abate.2006} for algorithmic implementations of numeric Laplace inversion). Still, \eqref{eqth3} is of enormous importance since Monte Carlo simulation of these scenarios are particularly time consuming. Another important observation is that $F_{\gamma_i}(\theta)$ does not depend on the network density $\lambda$ which matches previous results as in \cite[Section~III]{Haenggi.2016} for Poisson cellular networks without rate control and reliability restrictions.
	
Fig.~\ref{Fig2}b corroborates \eqref{eqth3} and \eqref{eqth4} since the numerical evaluation of \eqref{eqth3} and the Monte Carlo results match, while \eqref{eqth4} matches when $\theta\ge\epsilon_i$ and works as a lower bound in the complementary region.
For a fixed reliability constraint it is shown that as the path-loss exponent increases, the curves shift to the right and greater data rates are feasible.
This is because we are considering an interference-limited setup and notice that  the interference is affected more than the receive power of the desired signal itself since interfering signals endure larger distances.
While for fixed path-loss exponent the allocated data rates decrease as the reliability constraints become more strict and in such case the gap between curves with different values of $\epsilon_i$ matches their quotient according to \eqref{app}. For instance, the gap between the curves with $\epsilon_i=10^{-1}$ and $\epsilon_i=10^{-2}$ in Fig.~\ref{Fig2}b is $10$dB which matches $10^{-1}/10^{-2}$.
\section{Power Allocation and Feasibility}\label{power}
Herein, we derive and analyze the main conditions that are necessary for the model previously described to be feasible, while we discuss the optimum power allocation strategy for two transmission rate allocation problems: equal-rate (Subsection~\ref{power_e}) and maximum sum-rate (Subsection~\ref{power_s}). Notice that the transmission rate, $\mathcal{R}$, differs conceptually from the achievable rate, which is a more relevant performance metric since accounts also for the unsuccessful attempts; however, in practical scenarios where $\epsilon_i\le 10^{-1}$ both metrics approximate well. This and the fact that dealing with the transmission rate is analytically easier are the main reasons why we selected it as the performance metric.

For analytical tractability we focus on the $M=2$ setup, but notice that even when some existing results show that NOMA with more devices may
provide a better performance gain \cite{Zhang.2016}, this may not be practical. The reason is that considering processing complexity for SIC receivers, especially when SIC error propagation is considered, 2-users NOMA is actually more practical in reality \cite{Zhang.2017,Liu.2016}.
Finally, let's set $P_1=\beta P_T$, $P_2=(1-\beta)P_T$, where $\beta\in[1/2,\ 1]$ since by NOMA definition and SIC operation: $P_1\ge P_2$. Therefore, parameter $\beta$ completely determines the power allocation profile.
\subsection{Equal-rate allocation}\label{power_e}
Herein we consider the scenario in which both UEs are scheduled with the same transmission rate $\mathcal{R}=\log_2(1+\gamma_1)=\log_2(1+\gamma_2)$. Therefore, maximum fairness is attained which is practically advantageous when UEs require transmissions with very similar QoS constraints, e.g., similar $\epsilon_i$ and data rates, although our results hold independently of this. For this case and according to Remark~\ref{re1}: $\gamma_1=\gamma_2<\min(\varphi_1^*,\varphi_2^*)$. We now present the following result.
\begin{theorem}\label{the4}
	The power allocation profile
	\begin{align}
	\beta&=\frac{\varphi_1^*+\varphi_2^*+2\varphi_1^*\varphi_2^*}{2(1-\mu)\varphi_1^*\varphi_2^*}+\nonumber\\
	&\ \ \  -\frac{\sqrt{(\varphi_1^*+\varphi_2^*)^2+4\varphi_1^*\varphi_2^*(\varphi_1^*+\mu\varphi_2^*+\mu\varphi_1^*\varphi_2^*)}}{2(1-\mu)\varphi_1^*\varphi_2^*}\label{bet1}	
	\end{align}
    guarantees the same performance in terms of transmission rate for both UEs.
\end{theorem}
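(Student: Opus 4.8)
The plan is to convert the equal-rate requirement into a single polynomial equation in $\beta$ and solve it in closed form. Since $\log_2(1+\cdot)$ is strictly increasing, the condition $\mathcal{R}=\log_2(1+\gamma_1)=\log_2(1+\gamma_2)$ is equivalent to $\gamma_1=\gamma_2$. First I would specialize Lemma~\ref{the1} to $M=2$: substituting $P_1=\beta P_T$ and $P_2=(1-\beta)P_T$ into \eqref{gami}, the empty sums make the thresholds collapse to $\gamma_1=\varphi_1^*\beta/\bigl(1+\varphi_1^*(1-\beta)\bigr)$ and $\gamma_2=\varphi_2^*(1-\beta)/\bigl(1+\mu\varphi_2^*\beta\bigr)$, where $P_T$ cancels throughout, so that $\beta$ alone governs the allocation.

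Next I would impose $\gamma_1=\gamma_2$ and cross-multiply to clear both denominators. Grouping terms by powers of $\beta$ yields the quadratic
\begin{align}
(1-\mu)\varphi_1^*\varphi_2^*\,\beta^2-\bigl(\varphi_1^*+\varphi_2^*+2\varphi_1^*\varphi_2^*\bigr)\beta+\varphi_2^*\bigl(1+\varphi_1^*\bigr)=0.\nonumber
\end{align}
Applying the quadratic formula immediately produces $-b/(2a)$ as the first term of \eqref{bet1}, and it then remains to show that the discriminant $b^2-4ac$ simplifies to the radicand in \eqref{bet1}. The bookkeeping here is the crux: the quartic term $4(\varphi_1^*\varphi_2^*)^2$ and part of the cross term in $b^2$ are annihilated by $-4ac$, leaving precisely $(\varphi_1^*+\varphi_2^*)^2+4\varphi_1^*\varphi_2^*(\varphi_1^*+\mu\varphi_2^*+\mu\varphi_1^*\varphi_2^*)$.

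Finally I would justify the negative sign in front of the square root, i.e.\ discard the larger root. Since $a=(1-\mu)\varphi_1^*\varphi_2^*>0$ the parabola opens upward, and its value at $\beta=1$ equals $-\varphi_1^*(1+\mu\varphi_2^*)<0$; hence $\beta=1$ lies strictly between the two roots, so the larger root exceeds $1$ and is infeasible because $\beta\le 1$ by the NOMA power ordering. This leaves the smaller root, which is exactly \eqref{bet1}; moreover the product and sum of the roots are both positive, so this root is positive, and membership in the admissible interval $[1/2,1]$ ties to the decoding-order choice discussed for the optimal ordering. The main obstacle I anticipate is the algebraic simplification of the discriminant to the compact stated form together with a clean root-admissibility argument; the degenerate case $\mu=1$, where the leading coefficient vanishes and the equation becomes linear, should be noted separately or excluded.
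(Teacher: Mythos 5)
Your proposal is correct and takes essentially the same route as the paper's Appendix~D: imposing $\gamma_1=\gamma_2$ with the $M=2$ specialization of \eqref{gami} yields exactly the quadratic $\varphi_1^*\varphi_2^*(1-\mu)\beta^2-(\varphi_1^*+\varphi_2^*+2\varphi_1^*\varphi_2^*)\beta+\varphi_2^*(1+\varphi_1^*)=0$, whose smaller root is \eqref{bet1}; your discriminant bookkeeping checks out, and your sign-of-the-quadratic-at-$\beta=1$ argument for discarding the larger root is in fact cleaner than the paper's ``the reader can check'' remark. The only ingredient of the paper's proof you defer is the $\beta\ge 1/2$ feasibility: the paper derives the condition $\varphi_2^*\ge 2\varphi_1^*/\bigl(2+\varphi_1^*(1-\mu)\bigr)$ and shows that whenever it fails for one decoding order it holds after swapping the two UEs, which is what makes the stated profile always usable (your flagged degenerate case $\mu=1$, where the equation becomes linear, is likewise left unaddressed by the paper).
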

\begin{proof}
	See Appendix~\ref{App_D}. \phantom\qedhere	
\end{proof}
\vspace{-2mm}

The reachable SIR threshold can be easily calculated by substituting \eqref{bet1} into any of $\gamma_1$ or $\gamma_2$ expressions, which yields
\begin{align}
\gamma_1&=\gamma_2=-\frac{\varphi_1^*+\varphi_2^*}{2(\varphi_1^*+\mu\varphi_2^*(1+\varphi_1^*))}+\nonumber\\
&\ \ \ +\frac{\sqrt{(\varphi_1^*+\varphi_2^*)^2+4\varphi_1^*\varphi_2^*(\varphi_1^*+\mu(1+\varphi_1^*)\varphi_2^*)}}{2(\varphi_1^*+\mu\varphi_2^*(1+\varphi_1^*))}.\label{gam}
\end{align}

\begin{corollary}\label{cor2}
	For equal-rate allocation, the optimum decoding order for the UEs is $\varphi_2^*\ge \varphi_1^*$. 	
\end{corollary}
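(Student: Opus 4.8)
The plan is to turn the ordering question into a one-parameter comparison. Since the common rate $\mathcal{R}=\log_2(1+\gamma_1)$ is strictly increasing in the common threshold $\gamma_1=\gamma_2$, maximizing the rate is the same as maximizing this threshold, and for $M=2$ there are only two admissible decoding orders — assign the two standalone thresholds to the two decoding positions in one way or the other. It therefore suffices to show that the threshold attained from \eqref{gam} is always at least as large in the configuration with $\varphi_2^*\ge\varphi_1^*$.

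Rather than manipulate the nested radical in \eqref{gam}, I would go back to the two expressions for $\gamma_1$ and $\gamma_2$ obtained from \eqref{gami} with $M=2$, set both equal to a common value $\gamma$, solve each for $\beta$, and eliminate $\beta$ exactly as in the derivation of \eqref{bet1}. This collapses to a single quadratic in $\gamma$,
\[
\big[\varphi_1^*+\mu\varphi_2^*(1+\varphi_1^*)\big]\gamma^2+(\varphi_1^*+\varphi_2^*)\gamma-\varphi_1^*\varphi_2^*=0,
\]
whose positive root is precisely \eqref{gam}. The key structural fact I would exploit is that the linear coefficient $\varphi_1^*+\varphi_2^*$ and the constant $-\varphi_1^*\varphi_2^*$ are symmetric in the two thresholds, so swapping the decoding order alters only the leading coefficient $E:=\varphi_1^*+\mu\varphi_2^*(1+\varphi_1^*)$.

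Next I would establish that the positive root is strictly decreasing in $E$ with the other two coefficients held fixed. The cleanest route is to solve the quadratic for $E$ along the root, $E=\varphi_1^*\varphi_2^*/\gamma^2-(\varphi_1^*+\varphi_2^*)/\gamma$, and observe that $dE/d\gamma<0$ on the feasible range $0<\gamma<\varphi_1^*\varphi_2^*/(\varphi_1^*+\varphi_2^*)$, i.e.\ exactly where $E>0$. Hence the order with the smaller leading coefficient yields the larger threshold. Finally, a direct subtraction shows that $E$ exceeds its swapped counterpart by $(\varphi_1^*-\varphi_2^*)(1-\mu)$; since $\mu\in[0,\,1]$, the smaller leading coefficient is obtained precisely when the larger standalone threshold occupies the second decoding position, giving the optimal order $\varphi_2^*\ge\varphi_1^*$ (with both orders tying when $\mu=1$, where SIC provides no gain).

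The part I expect to require the most care is not this algebra but guaranteeing that the comparison is between two genuinely feasible configurations: I must check that for both orderings the extracted $\gamma$ is the relevant positive root, that $E>0$, and that the induced $\beta$ from \eqref{bet1} lies in $[1/2,\,1]$, so that the equal-rate point actually exists. Reducing the whole problem to the single coefficient $E$ via the symmetric quadratic is what makes the monotonicity transparent and avoids any direct handling of the square roots in \eqref{gam}.
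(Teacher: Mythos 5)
Correct, and essentially the paper's own argument: Appendix~E likewise isolates the order-symmetric quantities $S=\varphi_1^*+\varphi_2^*$ and $P=\varphi_1^*\varphi_2^*$, writes the equal-rate threshold as $\gamma_1=\big(\sqrt{S^2+4Pk}-S\big)/(2k)$ with $k=\varphi_1^*+\mu\varphi_2^*+\mu P$ --- exactly your leading coefficient $E$ --- proves $d\gamma_1/dk<0$, and concludes that the smaller $k$, hence the order $\varphi_2^*\ge\varphi_1^*$, is optimal. The only differences are mechanical: you establish the monotonicity by inverting the quadratic ($E=P/\gamma^2-S/\gamma$, strictly decreasing on $0<\gamma<P/S$, precisely where $E>0$) instead of differentiating the radical and proving the resulting inequality by squaring, and the feasibility check you flag but defer is dispatched by the paper in one line via \eqref{str}, since $\varphi_2^*\ge\varphi_1^*\ge 2\varphi_1^*/\big(2+\varphi_1^*(1-\mu)\big)$.
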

\vspace{-4mm}
\begin{proof}
See Appendix~\ref{App_E}. \phantom\qedhere	
\end{proof}
Let's say $\mathrm{UE}_A$ and $\mathrm{UE}_B$ are about to be served, then, Corollary~\ref{cor2}'s result states how they should be ordered at the BS for optimum performance. This is, one should calculate ${\varphi}^*$ for each $\mathrm{UE}$ according to \eqref{app} and then if ${\varphi}^*_A>{\varphi}^*_B\rightarrow  \mathrm{UE}_1=\mathrm{UE}_B,\ \mathrm{UE}_2=\mathrm{UE}_A$, otherwise if ${\varphi}^*_A<{\varphi}^*_B\rightarrow \mathrm{UE}_1=\mathrm{UE}_A,\ \mathrm{UE}_2=\mathrm{UE}_B$.

Fig.~\ref{Fig3}a shows the ratio $\varphi_1^*/\varphi_2^*$ for maximum transmission rate as a function of $\varphi_1^*+\varphi_2^*$. Since $\varphi_2^*\ge\varphi_1^*$, the ratio is always smaller or equal than 1. Notice that as $\varphi_1^*+\varphi_2^*$ increases and/or $\mu$ decreases, it is more desirable scheduling together for sharing the same spectrum resources those  $\mathrm{UE}$s having contrasting values of $\varphi_i^*$. According to Theorem~\ref{the2}, the way to control this lies on checking the $\mathrm{UE}$s reliability constraints and the ratio  between average signal and interference power. The more heterogeneous (homogeneous) the product of these parameters, the more (less) contrasting the values of $\varphi_i^*$. Additionally, according to the figure when SIC always fails the best performance comes when $\varphi_1^*=\varphi_2^*$.

A comparison with the OMA counterpart is required, especially because under OMA setup, both $\mathrm{UEs}$ require orthogonal spectrum resources, either in time or frequency. Although optimum OMA design usually conduces to different partition of resources, this is not practically viable since  requires cumbersome multi-user synchronization and scheduling tasks for partitioning the time, and could be even infeasible when bandwidth partition is required. Therefore, herein we assume equal time/frequency partition. Since $\varphi_i^*$ represents the SIR threshold required for $\mathrm{UE}_i$ for satisfying its reliability constraint if it is operating alone in the channel, its maximum allocable transmit rate is $\frac{1}{2}\log_2(1+\varphi_i^*)$, hence it is not the same for both $\mathrm{UEs}$. The goal in this subsection lies in achieving equal-rate performance, thus, the rate under the OMA setup is determined by the worst $\mathrm{UE}$ performance as $\frac{1}{2}\log_2(1+\min(\varphi_1^*,\varphi_2^*))=\frac{1}{2}\log_2(1+\varphi_1^*)$. Therefore, both $\mathrm{UEs}$ may be allocated with the same transmission rate, but while $\mathrm{UE}_1$ performs with $\epsilon_1$, $\mathrm{UE}_2$ performs with an outage performance smaller than $\epsilon_2$. Hence, we present the following result.
\begin{figure}[t!]
	\includegraphics[width=0.47\textwidth,right]{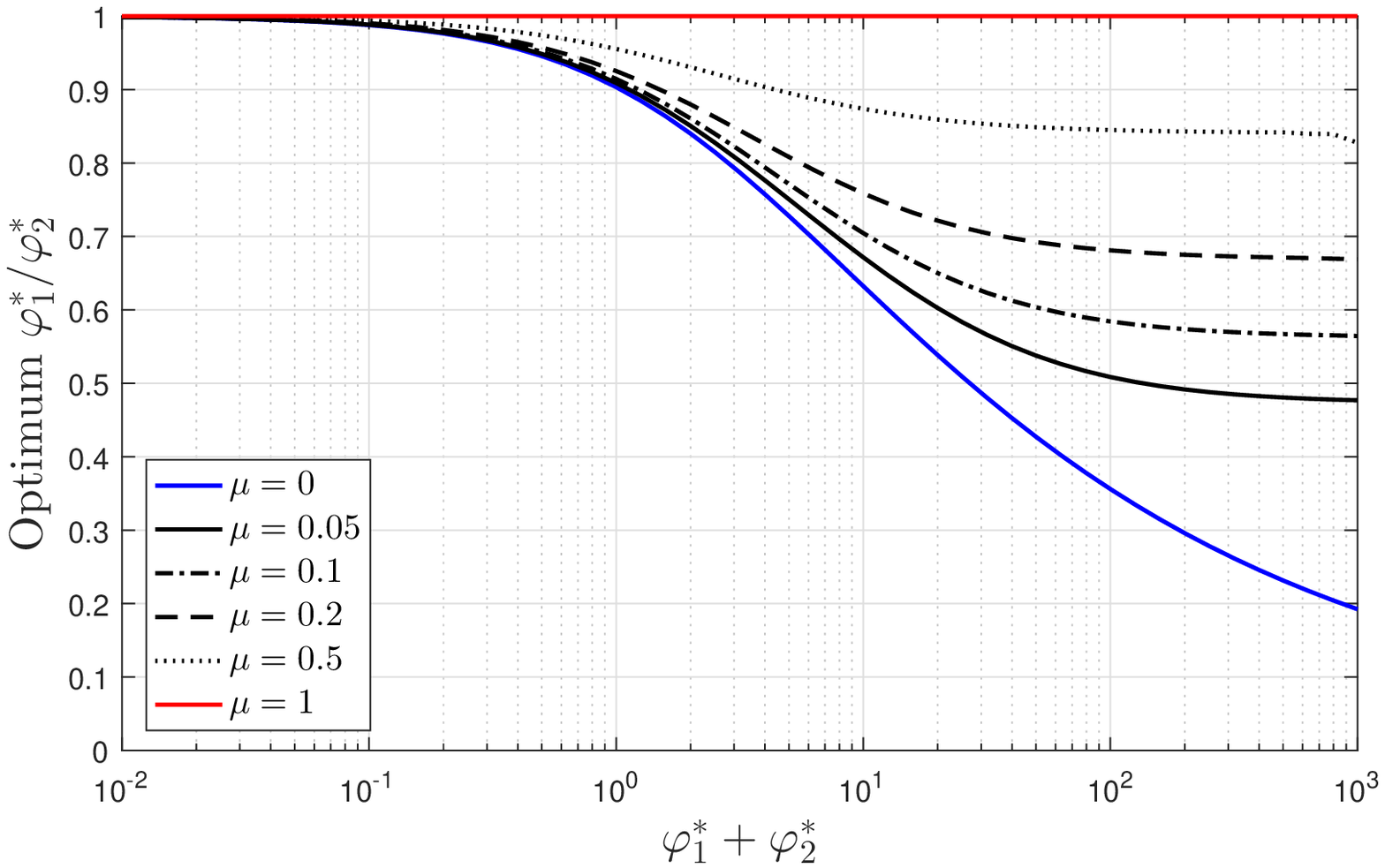}\vspace{2mm}\\
	\includegraphics[width=0.47\textwidth,right]{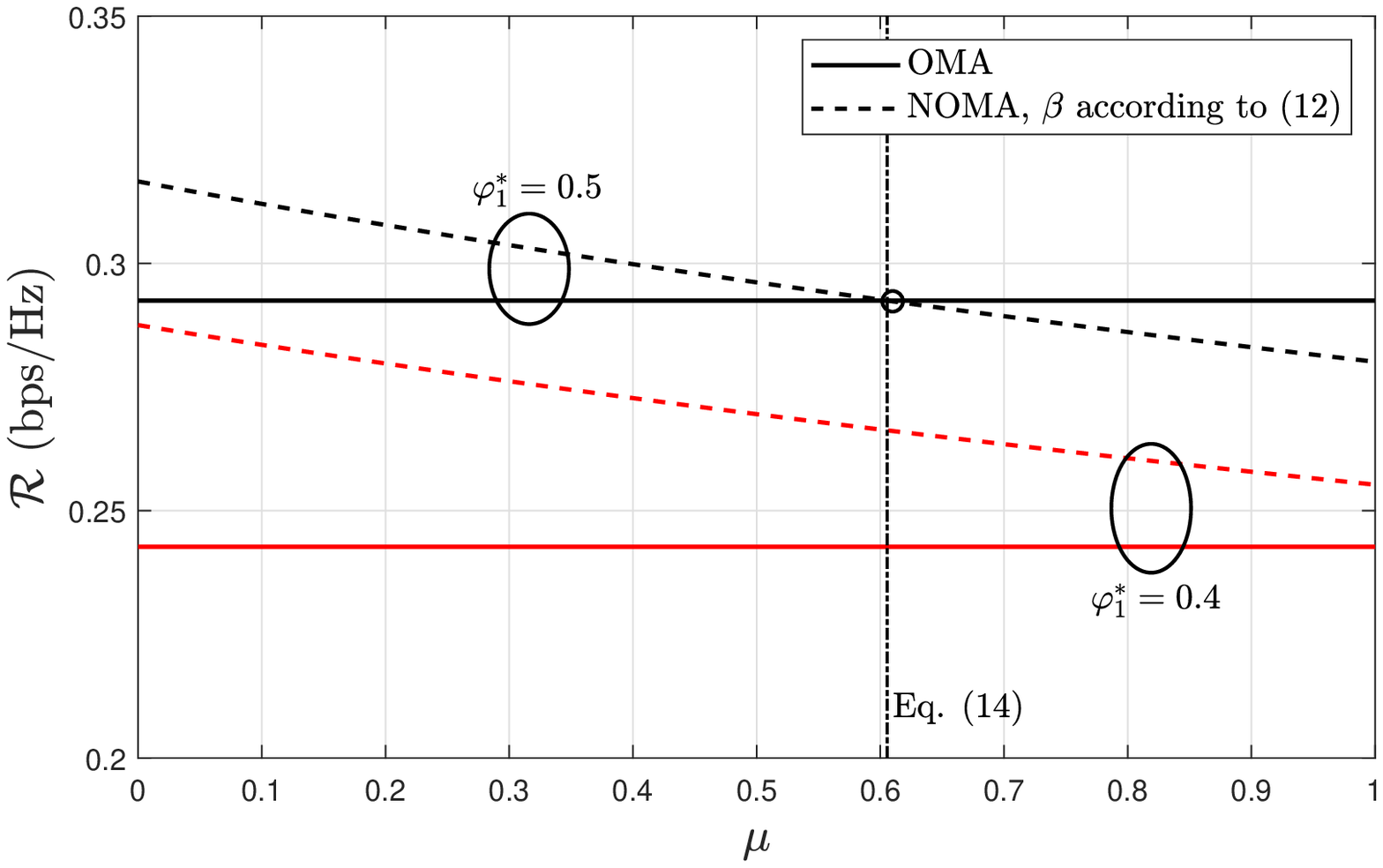}
	\caption{(a) Optimum ratio $\varphi_1^*/\varphi_2^*$ as a function of $\varphi_1^*+\varphi_2^*$ for $\mu\in\{0,0.05,0.1,0.2,0.5,1\}$ (top); (b) Transmission rate in bps/Hz as a function of $\mu$, for NOMA and OMA, $\varphi_2^*=0.6$ and $\varphi_1^*\in\{0.4,0.5\}$ (bottom).}		
	\label{Fig3}
\end{figure}

\begin{corollary}\label{re2}
	For equal-rate allocation and $M=2$, NOMA outperforms OMA when
\begin{align}
\mu<\frac{(\varphi_2^*-\varphi_1^*)(1+\sqrt{1+\varphi_1^*})}{\varphi_1^*\varphi_2^*\sqrt{1+\varphi_1^*}}.\label{muE}
\end{align}
\end{corollary}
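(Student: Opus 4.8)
The plan is to turn the qualitative statement ``NOMA outperforms OMA'' into a clean scalar inequality and then reduce that inequality to the bound on $\mu$. Under the equal-rate OMA benchmark with equal time/frequency partition, the achievable rate is $\tfrac12\log_2(1+\varphi_1^*)$, where I have used $\min(\varphi_1^*,\varphi_2^*)=\varphi_1^*$, which is exactly the optimal ordering $\varphi_1^*\le\varphi_2^*$ established in Corollary~\ref{cor2}. NOMA instead delivers the common rate $\log_2(1+\gamma_1)$ with $\gamma_1=\gamma_2$ given by \eqref{gam}. Comparing $\log_2(1+\gamma_1)$ with $\tfrac12\log_2(1+\varphi_1^*)$ and squaring, NOMA wins precisely when $(1+\gamma_1)^2>1+\varphi_1^*$, i.e.
\[
\gamma_1>\gamma_c:=\sqrt{1+\varphi_1^*}-1 .
\]
Thus the entire question reduces to deciding for which $\mu$ the equalized threshold exceeds $\gamma_c$.

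Next I would show that $\gamma_1$ is \emph{strictly decreasing} in $\mu$, so that $\gamma_1>\gamma_c$ is genuinely one-sided and produces a single cutoff. Writing \eqref{gam} with $A=\varphi_1^*+\varphi_2^*$ and $D(\mu)=\varphi_1^*+\mu\varphi_2^*(1+\varphi_1^*)$, one observes that the bracket under the root equals $4\varphi_1^*\varphi_2^* D(\mu)$, so $\gamma_1=\big(-A+\sqrt{A^2+4\varphi_1^*\varphi_2^* D}\big)/(2D)$. Since $D$ is affine and increasing in $\mu$, it suffices to differentiate in $D$; after clearing the positive factor $\sqrt{A^2+4\varphi_1^*\varphi_2^* D}$, the sign of the derivative equals that of $2A\sqrt{A^2+4\varphi_1^*\varphi_2^* D}-4\varphi_1^*\varphi_2^* D-2A^2$, and squaring the candidate inequality leaves $-(4\varphi_1^*\varphi_2^*)^2 D^2<0$, so the derivative is negative. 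Hence $\gamma_1$ decreases monotonically with $\mu$, there is a unique $\mu_c$ solving $\gamma_1=\gamma_c$, and NOMA beats OMA iff $\mu<\mu_c$.

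Finally I would compute $\mu_c$ by imposing $\gamma_1=\gamma_2=\gamma_c$ directly in the two instances of \eqref{gami} for $M=2$, namely $\gamma_c=\varphi_1^*\beta/[1+\varphi_1^*(1-\beta)]$ and $\gamma_c=\varphi_2^*(1-\beta)/[1+\varphi_2^*\mu\beta]$, with $P_1=\beta P_T$, $P_2=(1-\beta)P_T$. The first relation is linear in $\beta$ and, upon substituting $\gamma_c=\sqrt{1+\varphi_1^*}-1$, collapses to $\beta=\sqrt{1+\varphi_1^*}/(1+\sqrt{1+\varphi_1^*})$, which lies in $[1/2,1)$ as required. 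Inserting this $\beta$ into the second relation and solving the resulting linear equation for $\mu$, the simplifications $1-\beta=1/(1+\sqrt{1+\varphi_1^*})$ and $(\sqrt{1+\varphi_1^*}-1)(\sqrt{1+\varphi_1^*}+1)=\varphi_1^*$ cause the $\sqrt{1+\varphi_1^*}$ factors to cancel and reproduce exactly the expression in \eqref{muE}.

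I expect the monotonicity step to be the only real obstacle: the remaining work is substitution and clean cancellation, but ruling out a two-sided interval in $\mu$ hinges on the sign argument for $d\gamma_1/dD$. A useful sanity check is the degenerate case $\varphi_1^*=\varphi_2^*$, where \eqref{muE} gives $\mu_c=0$, so NOMA never strictly outperforms OMA for any residual interference $\mu>0$; conversely, with perfect cancellation $\mu=0$ the bound holds whenever $\varphi_2^*>\varphi_1^*$, as one would anticipate.
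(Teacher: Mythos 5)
Your proof is correct, and your endgame differs from the paper's in a way worth noting. Both arguments share the identical first reduction: NOMA beats OMA iff $\log_2(1+\gamma_1)>\tfrac12\log_2(1+\varphi_1^*)$, i.e.\ $\gamma_1>\sqrt{1+\varphi_1^*}-1$, which is exactly \eqref{gam1} in the paper. From there the paper substitutes the closed form \eqref{gam} into this inequality and grinds through the radical ("after some algebraic manipulations") to isolate $\mu$; since the resulting condition is linear in $\mu$ after squaring, one-sidedness falls out implicitly. You instead (i) prove $\gamma_1$ is strictly decreasing in $\mu$ to establish a unique cutoff $\mu_c$, and (ii) compute $\mu_c$ by imposing $\gamma_1=\gamma_2=\gamma_c$ in the two instances of \eqref{gami}, solving the first relation for $\beta=\sqrt{1+\varphi_1^*}/(1+\sqrt{1+\varphi_1^*})$ (which indeed lies in $[1/2,1)$, so feasibility holds) and the second for $\mu$ — I verified the cancellations and this reproduces \eqref{muE} exactly. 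This route buys transparency: it never manipulates the square root in \eqref{gam}, it explains structurally why the condition is one-sided in $\mu$, and the boundary algebra via $t=\sqrt{1+\varphi_1^*}$ is clean. The cost is the extra monotonicity lemma — but note you re-derived a computation the paper already has: Appendix~\ref{App_E} shows $d\gamma_1/dk<0$ with $k=\varphi_1^*+\mu\varphi_2^*+\mu\varphi_1^*\varphi_2^*$, which is precisely your $D(\mu)$, so you could simply have cited Corollary~\ref{cor2}'s proof and observed that $D$ is increasing in $\mu$. Two trivial slips that do not affect validity: after squaring in the sign argument the leftover term is $4(\varphi_1^*\varphi_2^*)^2D^2$, not $(4\varphi_1^*\varphi_2^*)^2D^2$ (the sign conclusion is unchanged); and your boundary computation implicitly uses that the equal-rate system has a unique feasible $\beta(\mu)$, which is guaranteed by Theorem~\ref{the4}, so the $(\beta,\mu_c)$ pair you find does correspond to the equal-rate operating point — worth one sentence to make explicit.
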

\begin{proof}
	For the equal-rate allocation, the NOMA scheme overcomes the OMA configuration when
	\begin{align}
	\log_2(1+\gamma_1)&>\frac{1}{2}\log_2(1+\varphi_1^*)\nonumber\\
	\gamma_1&>\sqrt{1+\varphi_1^*}-1.\label{gam1}
	\end{align}
	Now, substituting \eqref{gam} into \eqref{gam1}, and after some algebraic manipulations we attain the condition given in \eqref{muE}.	
\end{proof}

Based on \eqref{muE} notice that NOMA performance probably overcomes OMA's if $\varphi_1^*$ and $\varphi_2^*$ are sufficiently different values. On the other hand, when $\varphi_2^*=\varphi_1^*$ OMA will perform better. Therefore, when scheduling users to work on the same channel with NOMA they should have sufficiently different topological characteristics, e.g., different $\frac{r_i^{-\alpha}}{\sum_{j\in\Phi\backslash\{b_0\}}r_{j,i}^{-\alpha}}$, and/or QoS requirements, e.g., different $\epsilon_i$, as shown in \eqref{app}. This is corroborated in Fig.~\ref{Fig3}b where we can notice that the gap between both schemes increases when $\varphi_2^*-\varphi_1^*$ also increases. Fig.~\ref{Fig3}b also corroborates Corollary~\ref{re2} and notice that when evaluating \eqref{muE} with $\varphi_1^*=0.4,\ \varphi_2=0.6$ we obtain that the required values of $\mu$ for which NOMA performs worse than OMA are $\mu\ge 1.53$, which are infeasible since $\mu\le 1$, thus, for that setup NOMA will always be the better choice.
\subsection{Maximum Sum-rate allocation}\label{power_s}
The previous subsection addressed the problem of finding the power allocation profile and scheduling order such that both UEs can operate with the same maximum possible rate. This is equivalent to find $\max\limits_{\beta} \min(\gamma_1,\gamma_2)$, since increasing $\gamma_1$ conduces always to decrease $\gamma_2$ and vice versa. Herein we are going to focus on finding the  allocation strategy for maximum sum-rate, thus, $\max\limits_{\beta}\mathcal{R}$, where
\begin{align}
\mathcal{R}=R_1+R_2&=\log_2(1+\gamma_1)+\log_2(1+\gamma_2)\nonumber\\
&=\log_2\Big((1+\gamma_1)(1+\gamma_2)\Big)=\log_2\tilde{\gamma},
\end{align}
where $\tilde{\gamma}=(1+\gamma_1)(1+\gamma_2)$, thus, $\max\limits_{\beta}\mathcal{R}=\max\limits_{\beta}\tilde{\gamma}$. 

Notice that solving that problem could be extremely difficult, if not impossible, due to the tangled dependence of $\tilde{\gamma}$ on $\beta$, as shown next
\begin{align}
\tilde{\gamma}&=(1+\gamma_1)(1+\gamma_2)\nonumber\\
&=\frac{(1+\varphi_1^*)(1+\varphi_2-\beta\varphi_2^*(1-\mu))}{(1+(1-\beta)\varphi_1^*)(1+\beta\mu\varphi_2^*)}.\label{tildeG}
\end{align}
Only for $\mu=0$ the problem solution comes easy as follows
\begin{align}
\tilde{\gamma}|_{\mu=0}&=(1+\varphi_1^*)\frac{1+\varphi_2^*(1-\beta)}{1+\varphi_1^*(1-\beta)},\\
\frac{d\tilde{\gamma}|_{\mu=0}}{d\beta}&=(1+\varphi_1^*)\frac{\varphi_1^*-\varphi_2^*}{(1+\varphi_1^*(1-\beta))^2},
\end{align}
thus, $\tilde{\gamma}|_{\mu=0}$ is a strictly decreasing (increasing) function for $\varphi_1^*<\varphi_2^*$ ($\varphi_1^*>\varphi_2^*$) and any value of $\beta\in[0,1]$.
\begin{remark}
	Therefore, if $\mu=0$ then the optimal performance for the NOMA setup, in terms of maximum sum-rate, is obtained by setting
	\begin{equation}
	\beta^*= \left\{ \begin{array}{lll}	
	\!1/2,& \!\mathrm{if}\  \varphi_2^*\ge\varphi_1^*&\!\rightarrow \tilde{\gamma}|_{\mu=0}=\frac{(1+\varphi_1^*)(2+\varphi_2^*)}{2+\varphi_1^*}\\
	\!1,\!&\! \mathrm{if}\ \varphi_2^*<\varphi_1^* &\!\rightarrow \tilde{\gamma}|_{\mu=0}=1+\varphi_1^*
	\end{array}
	\right.\!\!\!\!.\label{bmu0}
	\end{equation}	
	Of course, it is desirable having/setting $\varphi_2^*>\varphi_1^*$ while using $\beta=1/2$, otherwise the optimum performance  is when $\beta=1$ which matches the OMA setup when only one user is being served at the time. Also, notice that $\beta^*=1/2$ conduces to a greater value of $\tilde{\gamma}|_{\mu=0}$ than when $\beta=1$.
\end{remark}

For the general case when $\mu>0$ we propose using the following result henceforth.
\begin{proposition}\label{pro1}
	With $\bar{\gamma}=\gamma_1+\gamma_2$, $\tilde{\gamma}$ approximates to
	\begin{align}
	\tilde{\gamma}\approx \Big[1+\frac{1}{2}\bar{\gamma}\Big]^2,\label{gamB}
	\end{align}
\end{proposition}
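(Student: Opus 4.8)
The plan is to derive \eqref{gamB} from the elementary polarization identity $ab=\left(\frac{a+b}{2}\right)^2-\left(\frac{a-b}{2}\right)^2$, which holds for any real numbers $a$ and $b$. First I would set $a=1+\gamma_1$ and $b=1+\gamma_2$, so that the product $ab$ is exactly $\tilde{\gamma}=(1+\gamma_1)(1+\gamma_2)$. Their arithmetic mean is $\frac{a+b}{2}=1+\frac{1}{2}(\gamma_1+\gamma_2)=1+\frac{1}{2}\bar{\gamma}$, while their semi-difference is $\frac{a-b}{2}=\frac{1}{2}(\gamma_1-\gamma_2)$.

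Substituting these into the identity yields the exact relation
\begin{align}
\tilde{\gamma}=\left(1+\frac{1}{2}\bar{\gamma}\right)^2-\frac{1}{4}(\gamma_1-\gamma_2)^2.\nonumber
\end{align}
The approximation \eqref{gamB} then follows immediately by discarding the non-negative correction term $\frac{1}{4}(\gamma_1-\gamma_2)^2$. Equivalently, \eqref{gamB} is nothing but the arithmetic-geometric mean upper bound on the product of the two non-negative factors $1+\gamma_1$ and $1+\gamma_2$, and it is attained with equality precisely when $\gamma_1=\gamma_2$ --- the equal-rate regime of Subsection~\ref{power_e}, where the approximation becomes exact.

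There is no analytical difficulty here; the only point requiring justification --- and the one I would emphasize rather than any calculation --- is that the dropped term $\frac{1}{4}(\gamma_1-\gamma_2)^2$ is negligible in the operating regime of interest. This is the main obstacle, and it is addressed by noting that near the sum-rate optimum the two factors are driven to be comparable, and that in practical NOMA pairings the per-UE SIR thresholds $\gamma_1$ and $\gamma_2$ remain of the same order; hence $(\gamma_1-\gamma_2)^2$ stays small relative to the linear and quadratic terms in $\bar{\gamma}$ retained in \eqref{gamB}.
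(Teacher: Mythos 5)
Your proof is correct and follows essentially the same route as the paper: both rest on the arithmetic--geometric mean inequality applied to the factors $1+\gamma_1$ and $1+\gamma_2$, with the same observation that the approximation is tight when $\gamma_1\approx\gamma_2$ (the paper additionally notes the regime $\gamma_1,\gamma_2\ll 1$ and quantifies the error numerically via a relative error function). Your use of the polarization identity is a minor refinement, since it exhibits the dropped term exactly as $\tfrac{1}{4}(\gamma_1-\gamma_2)^2$ rather than bounding it implicitly.
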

\begin{proof}
	See Appendix~\ref{App_F}. \phantom\qedhere
\end{proof}

Finally, we provide two approximate results characterizing the optimum performance of NOMA and the region for which NOMA outperforms OMA for the general case of $\mu\ge 0$.
\begin{theorem}\label{the5}
	The optimal power allocation profile for maximum sum-rate in the NOMA setup is
	\begin{equation}
	\beta^*= \left\{ \begin{array}{ll}	
	1/2,&\ \mathrm{if}\  \mu<\frac{2(\varphi_2^*-\varphi_1^*)+\varphi_1^*(\varphi_2^*-2)}{\varphi_1^*\varphi_2^*(1+\varphi_1^*)}\\
	1,& \ \mathrm{otherwise}
	\end{array}
	\right..\label{eqBo}
	\end{equation}
\end{theorem}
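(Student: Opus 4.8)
The plan is to turn the intractable maximization of the product $\tilde{\gamma}$ into the tractable maximization of the sum $\bar{\gamma}=\gamma_1+\gamma_2$ by invoking Proposition~\ref{pro1}. Since \eqref{gamB} gives $\tilde{\gamma}\approx\big[1+\tfrac{1}{2}\bar{\gamma}\big]^2$ and the map $x\mapsto(1+x/2)^2$ is strictly increasing for $x\ge 0$ (and $\bar{\gamma}>0$), we have $\max_{\beta}\tilde{\gamma}=\max_{\beta}\bar{\gamma}$ under this approximation. Thus the whole sum-rate problem collapses to locating the maximizer of the scalar function $\bar{\gamma}(\beta)$ on the admissible interval $\beta\in[1/2,1]$.

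First I would particularize Lemma~\ref{the1}, i.e.\ \eqref{gami}, to $M=2$ with $P_1=\beta P_T$ and $P_2=(1-\beta)P_T$, which yields the explicit rational forms $\gamma_1(\beta)=\tfrac{\beta\varphi_1^*}{1+(1-\beta)\varphi_1^*}$ and $\gamma_2(\beta)=\tfrac{(1-\beta)\varphi_2^*}{1+\beta\mu\varphi_2^*}$, so that $\bar{\gamma}(\beta)=\gamma_1(\beta)+\gamma_2(\beta)$ is a smooth, explicit function on $[1/2,1]$. The key structural step is then to prove that $\bar{\gamma}(\beta)$ is convex for every $\mu\ge 0$. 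Differentiation gives $\gamma_1'(\beta)=\varphi_1^*(1+\varphi_1^*)\big(1+(1-\beta)\varphi_1^*\big)^{-2}>0$ and $\gamma_2'(\beta)=-\varphi_2^*(1+\mu\varphi_2^*)\big(1+\beta\mu\varphi_2^*\big)^{-2}<0$, and a second differentiation shows that both $\gamma_1''$ and $\gamma_2''$ are nonnegative, each being a positive constant times a positive affine factor raised to the power $-3$. Hence $\bar{\gamma}''\ge 0$, and a convex function on a closed interval attains its maximum only at an endpoint. This is exactly what forces $\beta^*\in\{1/2,1\}$ and rules out any interior optimizer, which is the content that the piecewise form of \eqref{eqBo} asserts.

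It then remains to decide which endpoint wins, by comparing the two values. Evaluating gives $\bar{\gamma}(1)=\varphi_1^*$ and $\bar{\gamma}(1/2)=\tfrac{\varphi_1^*}{2+\varphi_1^*}+\tfrac{\varphi_2^*}{2+\mu\varphi_2^*}$, and the decision inequality $\bar{\gamma}(1/2)>\bar{\gamma}(1)$, after clearing the strictly positive denominators and isolating $\mu$, reduces to a threshold on $\mu$ of the form displayed in \eqref{eqBo}: when it holds the equal-power split $\beta^*=1/2$ is optimal, and otherwise the single-user allocation $\beta^*=1$ (which degenerates to the OMA one-user case) is optimal. I would also check the behavior qualitatively against the $\mu=0$ Remark in \eqref{bmu0}, namely that a sufficiently larger $\varphi_2^*$ relative to $\varphi_1^*$ favors $\beta^*=1/2$, keeping in mind that the two can differ at the boundary because the present result rests on the approximation \eqref{gamB} whereas \eqref{bmu0} uses the exact $\tilde{\gamma}$.

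The main obstacle I anticipate is twofold. The first part is establishing the convexity cleanly: the sign of $\bar{\gamma}''$ must be controlled uniformly over $\mu\ge 0$ and $\beta\in[1/2,1]$, since it is precisely this nonnegativity that legitimizes discarding interior critical points and reducing the search to the two endpoints. The second part is purely algebraic, namely carrying the endpoint comparison through the cross-multiplication and cancellations so that it condenses exactly into the stated $\mu$-threshold; this is routine but error-prone because of the coupled denominators $2+\varphi_1^*$ and $2+\mu\varphi_2^*$.
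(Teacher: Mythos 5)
Your proposal is correct and takes essentially the same route as the paper's proof in Appendix~\ref{App_G}: invoke Proposition~\ref{pro1} (with the monotonicity of $x\mapsto(1+x/2)^2$) to replace $\max_\beta\tilde{\gamma}$ by $\max_\beta\bar{\gamma}$, show $d^2\bar{\gamma}/d\beta^2\ge 0$ so that convexity pushes the maximizer to an endpoint of $[1/2,1]$, and then compare $\bar{\gamma}(1/2)=\frac{\varphi_1^*}{2+\varphi_1^*}+\frac{\varphi_2^*}{2+\mu\varphi_2^*}$ against $\bar{\gamma}(1)=\varphi_1^*$ to obtain the $\mu$-threshold. One caution on the algebra you deferred as ``routine but error-prone'': actually carrying out the cross-multiplication yields the condition $\mu<\frac{2(\varphi_2^*-\varphi_1^*)+\varphi_1^*(\varphi_2^*-2\varphi_1^*)}{\varphi_1^*\varphi_2^*(1+\varphi_1^*)}$, whose numerator term $\varphi_1^*(\varphi_2^*-2\varphi_1^*)$ differs from the $\varphi_1^*(\varphi_2^*-2)$ printed in \eqref{eqBo} (the two agree only when $\varphi_1^*=1$), so rather than matching the displayed formula you should verify it --- the printed threshold appears to carry a typo.
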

\begin{proof}
	See Appendix~\ref{App_G}. \phantom\qedhere
\end{proof}
Notice that when $\beta=1/2$ we have
\begin{align}
\bar{\gamma}=\frac{\varphi_1^*}{2+\varphi_1^*}+\frac{\varphi_2^*}{2+\mu\varphi_2^*}.\label{gamB2},
\end{align}
and for $\beta=1\rightarrow \bar{\gamma}=\varphi_1^*$.

\begin{figure}[t!]
	\includegraphics[width=0.47\textwidth,right]{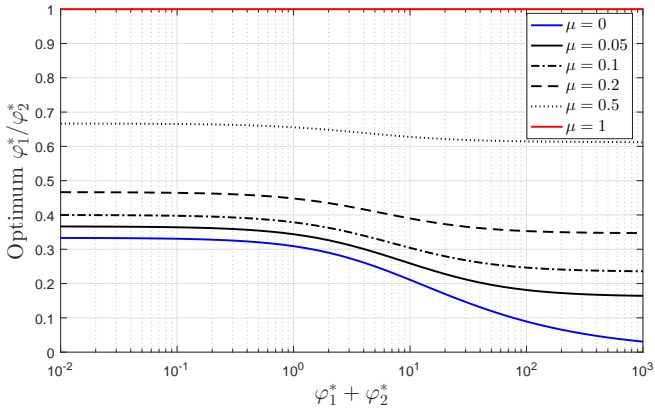}
	\caption{Optimum ratio $\varphi_1^*/\varphi_2^*$ as a function of $\varphi_1^*+\varphi_2^*$ for $\mu\in\{0,0.05,0.1,0.2,0.5,1\}$ and $\beta=1/2$.}		
	\label{Fig4}
\end{figure}
As done in the previous subsection for the optimum equal-rate allocation, herein we show in Fig.~\ref{Fig4} the optimum ratio $\varphi_1^*/\varphi_2^*$ as a function of $\varphi_1^*+\varphi_2^*$ for the maximum sum-rate allocation problem. Again, the trend is that the larger $\varphi_1^*+\varphi_2^*$ and/or smaller $\mu$, the more contrasting should be the values of $\varphi_i^*$ for optimum performance. However, in this case the dependence is much weaker on $\varphi_1^*+\varphi_2^*$ than for the case of equal-rate allocation which may facilitate the design and scheduling tasks in a practical system.

As in the previous subsection, herein we also consider OMA with equal partition of spectrum resources, either in time or frequency, thus, the maximum transmission rate for each $\mathrm{UE}_i$ signal is $\frac{1}{2}\log_2(1+\varphi_i^*)$. Then, the following result holds.
\begin{theorem}\label{the6}
	The NOMA setup with $M=2$ outperforms  always OMA in terms of maximum sum-rate when $\mu=0$, while \textit{almost surely}\footnote{For $\mu>0$ the proof takes advantage of result in Proposition~\ref{pro1} since using the exact expression of $\tilde{\gamma}$ is cumbersome. Therefore, \eqref{cor4} is expected to hold given the accuracy of \eqref{gamB}, hence the term \textit{almost surely}.} when 
	\begin{align}
	\mu&<\frac{\sqrt{2}(\varphi_1^*+2)^2\sqrt{2+\varphi_1^*+\varphi_2^*}}{\varphi_1^{*2}(2\varphi_1^*+3)+2\varphi_2^*(\varphi_1^*+2)^2}+\nonumber\\
	&\qquad\ -\frac{4\varphi_1^{*3}+6\varphi_1^{*2}+\varphi_1^{*2}\varphi_2^*+6\varphi_1^*\varphi_2^*+8\varphi_2^*}{\varphi_2^*(\varphi_1^{*2}(2\varphi_1^*+3)+2\varphi_2^*(\varphi_1^*+2)^2)}\label{cor4}
	\end{align}
	with $\varphi_2^*>\varphi_1^*$.
\end{theorem}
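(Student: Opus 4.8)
The plan is to split the argument by the value of $\mu$, handling the exact case $\mu=0$ first and then $\mu>0$ through the approximation of Proposition~\ref{pro1}. In both cases the OMA sum-rate baseline is $\tfrac12\log_2\big((1+\varphi_1^*)(1+\varphi_2^*)\big)$, so writing the NOMA rate as $\log_2\tilde\gamma$ the claim ``NOMA beats OMA'' is equivalent to
\begin{align}
\tilde\gamma>\sqrt{(1+\varphi_1^*)(1+\varphi_2^*)}.\label{plan_base}
\end{align}
By the discussion around \eqref{eqBo}, the only configuration that can possibly beat OMA is $\beta^*=1/2$ together with $\varphi_2^*>\varphi_1^*$: the alternative $\beta^*=1$ collapses NOMA to a single served UE, and $\log_2(1+\varphi_1^*)>\tfrac12\log_2\big((1+\varphi_1^*)(1+\varphi_2^*)\big)$ fails precisely when $\varphi_2^*>\varphi_1^*$. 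I would therefore carry out the whole comparison at $\beta=1/2$.

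For $\mu=0$ I would use the exact value recorded in the Remark containing \eqref{bmu0}, namely $\tilde\gamma|_{\mu=0}=\frac{(1+\varphi_1^*)(2+\varphi_2^*)}{2+\varphi_1^*}$. Substituting into \eqref{plan_base} and squaring both positive sides, the statement reduces to the polynomial inequality $(1+\varphi_1^*)(2+\varphi_2^*)^2>(1+\varphi_2^*)(2+\varphi_1^*)^2$; expanding, the difference of the two sides factors as $(\varphi_2^*-\varphi_1^*)\big(\varphi_1^*+\varphi_2^*+\varphi_1^*\varphi_2^*\big)$, which is strictly positive whenever $\varphi_2^*>\varphi_1^*$. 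This disposes of the $\mu=0$ claim with no approximation.

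For $\mu>0$ I would apply Proposition~\ref{pro1} to \emph{both} schemes: to NOMA as $\tilde\gamma\approx\big[1+\tfrac12\bar\gamma\big]^2$ via \eqref{gamB}, with $\bar\gamma$ given by \eqref{gamB2}, and to OMA as $\sqrt{(1+\varphi_1^*)(1+\varphi_2^*)}\approx 1+\tfrac12(\varphi_1^*+\varphi_2^*)$, i.e.\ the same identity with $\gamma_i=\varphi_i^*$. Condition \eqref{plan_base} then reads $\big[1+\tfrac12\bar\gamma\big]^2>1+\tfrac12(\varphi_1^*+\varphi_2^*)$. Completing the square in $\bar\gamma$ (which enters quadratically) produces the radical and reduces this to $\bar\gamma+2>\sqrt{2(2+\varphi_1^*+\varphi_2^*)}$. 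Isolating from \eqref{gamB2} the single $\mu$-dependent fraction $\frac{\varphi_2^*}{2+\mu\varphi_2^*}$ then leaves a linear-in-$\mu$ inequality whose solution is $\mu<1/K-2/\varphi_2^*$ with $K=\sqrt{2(2+\varphi_1^*+\varphi_2^*)}-\frac{4+3\varphi_1^*}{2+\varphi_1^*}$.

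The last and most delicate step is to massage this bound into the stated form \eqref{cor4}. I would rationalize $1/K$ by the conjugate $(2+\varphi_1^*)\sqrt{2(2+\varphi_1^*+\varphi_2^*)}+(4+3\varphi_1^*)$; after expanding $2(2+\varphi_1^*)^2(2+\varphi_1^*+\varphi_2^*)-(4+3\varphi_1^*)^2$ the denominator collapses exactly to $\varphi_1^{*2}(2\varphi_1^*+3)+2\varphi_2^*(\varphi_1^*+2)^2$. Placing $1/K$ and $2/\varphi_2^*$ over the common denominator $\varphi_2^*\big[\varphi_1^{*2}(2\varphi_1^*+3)+2\varphi_2^*(\varphi_1^*+2)^2\big]$ and collecting numerator terms reproduces \eqref{cor4}. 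I expect this rationalization and polynomial bookkeeping to be the main obstacle, since the conceptual content --- reduce to $\beta=1/2$, apply Proposition~\ref{pro1} on both sides, complete the square --- is routine. Because the $\mu>0$ branch rests on the approximation \eqref{gamB}, the conclusion holds only \emph{almost surely}, as the footnote anticipates; a final sanity check that the threshold in \eqref{cor4} stays inside the $\beta^*=1/2$ region of \eqref{eqBo} would confirm that $\beta=1/2$ is genuinely optimal wherever \eqref{cor4} holds.
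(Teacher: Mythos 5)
Your proposal is correct and follows essentially the same route as the paper's Appendix~H: reduce to $\tilde{\gamma}>\sqrt{(1+\varphi_1^*)(1+\varphi_2^*)}$, settle $\mu=0$ exactly at $\beta=1/2$ via \eqref{bmu0}, and for $\mu>0$ replace the OMA geometric mean by the arithmetic mean and the NOMA side by Proposition~\ref{pro1}, then solve the resulting condition \eqref{eqMu} linearly in $\mu$ (your conjugate rationalization is exactly the algebra the paper compresses into ``solving \eqref{eqMu} as a function of $\mu$,'' and your denominator $\varphi_1^{*2}(2\varphi_1^*+3)+2\varphi_2^*(\varphi_1^*+2)^2$ checks out). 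The only cosmetic deviations are that you factor $(\varphi_2^*-\varphi_1^*)(\varphi_1^*+\varphi_2^*+\varphi_1^*\varphi_2^*)>0$ directly at $\mu=0$ where the paper runs an inequality chain ending in AM--GM, and that your framing of the OMA substitution as an ``approximation'' is actually the inequality $\sqrt{(1+\varphi_1^*)(1+\varphi_2^*)}\le 1+\tfrac{1}{2}(\varphi_1^*+\varphi_2^*)$ in the favorable direction, so the only genuine source of the ``almost surely'' caveat is the NOMA-side approximation \eqref{gamB}, as the footnote states.
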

\begin{proof}
See Appendix~\ref{App_H}. \phantom\qedhere	
\end{proof}
\begin{figure}[t!]
	\includegraphics[width=0.47\textwidth,right]{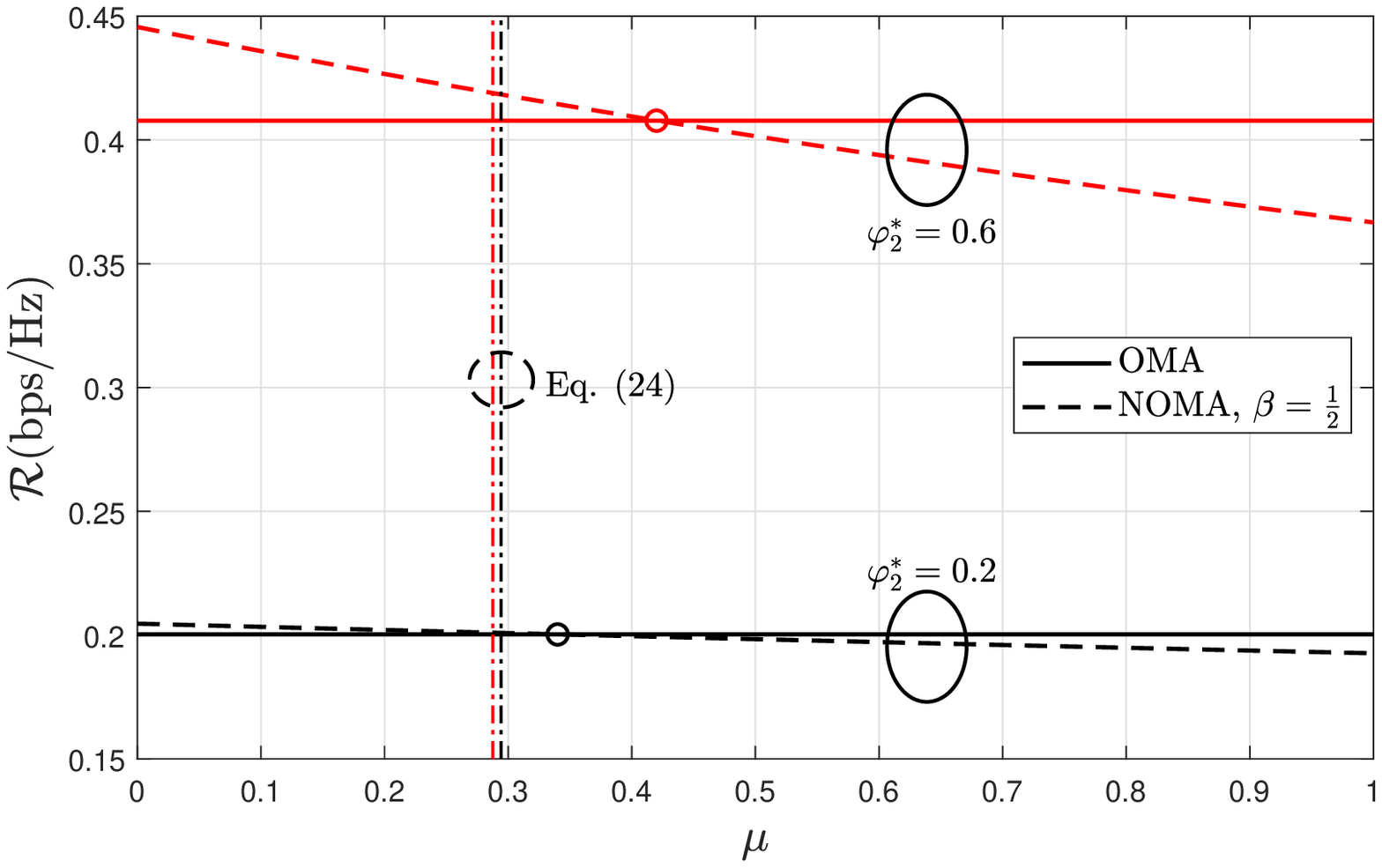}\vspace{2mm}\\ \includegraphics[width=0.47\textwidth,right]{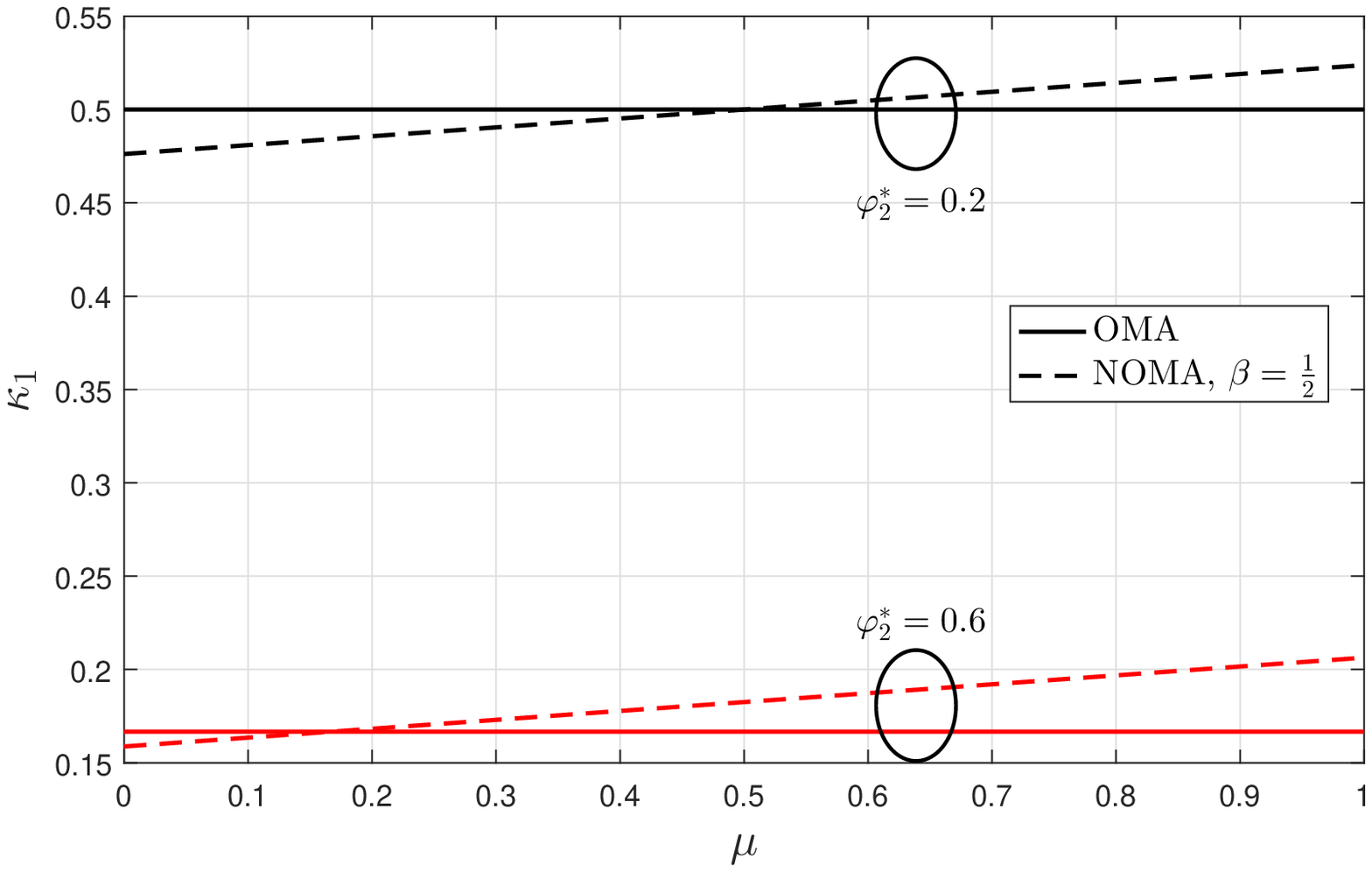}
	\caption{(a) Transmission rate in bps/Hz (top) and (b) Fairness coefficient (bottom), as a function of $\mu$, for NOMA and OMA, $\varphi_1^*=0.1$ and $\varphi_2^*\in\{0.2,0.6\}$.}		
	\label{Fig5}
\end{figure}
See Fig.~\ref{Fig5}a for a comparison between NOMA and OMA schemes and notice that \eqref{cor4} holds in each case. Also, and as shown in the figure, we can assure that NOMA always overcomes OMA when $\mu\rightarrow 0$, as it was claimed by Theorem~\ref{the6}. Finally, a comparison in terms of fairness when allocating the transmission rate is required. While the power allocation scheme proposed in the previous subsection allows reaching maximum fairness, this is not longer the case when the goal is maximizing the sum-rate. As a measure of fairness let us consider the ratio between the SIR thresholds of both $\mathrm{UE}$s, since this is analytically easier to handle than the transmission rate itself. We define the fairness coefficient of $\mathrm{UE}_i$ with respect to $\mathrm{UE}_j$, with $i\ne j$ and $i,j\in\{1,2\}$, as $\kappa_i^{\mathrm{oma}}=\varphi^*_i/\varphi^*_{j}$ and $\kappa_i^{\mathrm{noma}}=\gamma_i/\gamma_j$, and the closer $\kappa_i$ is from $1$, the fairer the rate allocation. For optimality we set $\varphi_1^*\le \varphi_2^*$ according to Theorem~\ref{the6}, then, focusing on $\kappa_1$ we have that $\kappa_1^{\mathrm{oma}}=\varphi_1^*/\varphi_2^*\le 1$ and
	\begin{align}
	\kappa_1^{\mathrm{noma}}\stackrel{(a)}{=}\frac{\frac{\frac{1}{2}\varphi_1^*}{1+\frac{1}{2}\varphi_1^*}}{\frac{\frac{1}{2}\varphi_2^*}{1+\frac{1}{2}\mu\varphi_2^*}}=\frac{\varphi_1^*(2+\mu\varphi_2^*)}{\varphi_2^*(2+\varphi_1^*)}=\kappa_1^{\mathrm{oma}}\frac{2+\mu\varphi_2^*}{2+\varphi_1^*},\label{k1}
	\end{align}	
	where $(a)$ comes from setting $\beta=1/2$ in \eqref{gami} as this was proved to be the optimum when NOMA provides greater sum-rate than OMA. Notice that as long as $\mu<\varphi_1^* /\varphi_2^*$, OMA is fairer as also illustrated in Fig.~\ref{Fig5}b, however this is not advantageous when heterogeneous services need to be provided. In practice, when considering the entire set of frequency-time resources, previous results are useful when making the pairing of those $\mathrm{UEs}$ that will be scheduled with the same spectrum resources.
\section{Algorithmic and Performance Remarks}\label{finalre}
Section~\ref{FB} focused on the optimum allocation of the transmission rates of concurrent downlink NOMA users under practical reliability constraints and given their preassigned transmit powers. Meanwhile, Section~\ref{power} explained how to find these transmit powers for optimum performance in the 2-user NOMA scenario. Next, we summarize how to integrate  both rate and power allocation schemes for attaining the optimum performance. This is discussed in Subsection~\ref{s1}, while Subsection~\ref{s2} shows how to calculate the average rate performance for NOMA and OMA setups as it will be used for illustrating numerical results in Section~\ref{results}.
\subsection{Optimum Rate-Power Allocation}\label{s1}
By combining the results from Section~\ref{FB} and~ \ref{power}, the optimum rate-power allocation algorithm for a 2-user downlink NOMA setup without instantaneous CSI at the BS is given as:
\begin{enumerate}
	\item Compute ${\varphi}^*$ according to \eqref{app} for each $\mathrm{UE}$.
	\item Set the $\mathrm{UE}$ with the lowest ${\varphi}^*$ as first in the decoding order, while the other becomes second. This is grounded on Corollary~\ref{cor2} and Theorem~\ref{the6} results.
	\item Compute $P_1=\beta P_T$ and $P_2=(1-\beta)P_T$, where $\beta$ is given by \eqref{the4} and \eqref{eqBo} for equal-rate and maximum sum-rate allocation, respectively.
	\item Compute $\gamma_i,\ i=1,2$ according to \eqref{gami} with $M=2$. For the case of equal-rate allocation it is also possible using \eqref{gam}, which yields the same result.
	\item Set $P_i$ and $\log_2(1+\gamma_i)$ as transmit power and rate, respectively, of the signal intended to $\mathrm{UE}_i$.
\end{enumerate}
\subsection{Average performance of NOMA and OMA}\label{s2}
For each $\mathrm{UE}_i$,  the distribution of its SIR threshold $\gamma_i$ with respect to the PPP spatial randomness was found in Subsection~\ref{dSIR} for a fixed power allocation profile. When BSs allocate their transmit power based on the network deployment, as it is the case of the power control scheme for 2-user NOMA proposed in Section~\ref{power} for equal-rate and maximum sum-rate problems, expressions \eqref{eqth3} and \eqref{eqth4} in Subsection~\ref{dSIR} do not hold anymore. This is because $P_i$ is now another random variable in the PPP. In such cases characterizing the distribution of $\gamma_i$, hence, the distribution of the allocated rate, seems analytically intractable because of the tangled dependence of $\beta$ on $\varphi_i^*$ according to \eqref{bet1}, and \eqref{eqBo} for $\mu>0$, respectively. Only for the case of perfect SIC in a system where the goal is maximizing the sum-rate, such distribution can be obtained since the optimum $\beta$ is deterministically $1/2$ according to \eqref{eqBo}, thus, \eqref{eqth3} holds and \eqref{ztheta} becomes
	\begin{align}
	z_i(\theta)&=\left(\frac{P_T}{2\theta}-\frac{\mu P_T}{2}\sum\limits_{j=1}^{i -1} 1 -\frac{P_T}{2}\sum\limits_{j=i+1}^{2} 1 \right) \frac{\epsilon_i}{P_T}\nonumber\\
	&= \frac{\epsilon_i}{2}\Big(\frac{1}{\theta}+(1-\mu)(i-1)-1\Big).\label{zi}
	\vspace*{-8mm}
	\end{align}

For a comparison between OMA and NOMA in terms of average rate performance, $\bar{\mathcal{R}}$, over the PPP, following expressions hold
\begin{itemize}
	\item  equal-rate allocation
	\begin{align}
	\bar{\mathcal{R}}\!=\!\left\{ \begin{array}{ll}	
	\!\!\!\mathbb{E}\Big[\frac{1}{2}\log_2\!\Big(\!1\!+\!\min\big(\varphi_1^*,\varphi_2^*\big)\!\Big)\Big|\Phi\Big]\!\!:& \! \mathrm{OMA}\\
	\!\!\!\mathbb{E}\Big[\log_2\!\big(1+\gamma_1\big)\Big|\Phi\Big]\!\!:& \! \mathrm{NOMA}
	\end{array}\!\!\!,
	\right.
	\end{align}
	\item  maximum sum-rate allocation
	\begin{align}
	\bar{\mathcal{R}}\!=\!\left\{ \begin{array}{ll}	
	\!\!\!\mathbb{E}\Big[\frac{1}{2}\log_2\!\big(1\!+\!\varphi_1^*\big)\big(1+\varphi_2^*\big)\Big|\Phi\Big]\!\!:&  \mathrm{OMA}\\
	\!\!\!\mathbb{E}\Big[\log_2\!\tilde{\gamma}\Big|\Phi\Big]\!\!:&  \mathrm{NOMA}
	\end{array}\!\!\!,
	\right.
	\end{align}
	where $\tilde{\gamma}$ is given in \eqref{tildeG}.
\end{itemize}
These expressions are used in Section~\ref{results} for numerical analysis. Notice also that as a direct consequence of the discussions in Subsection~\ref{dSIR}, the average transmission rate performance of NOMA and OMA setups does not depend on the network density when randomizing the network topology through the PPP. This is corroborated in Fig.~\ref{Fig6} where the average is taken over the network realizations. The values of the system parameters are those given at the beginning of the next section.
	\begin{figure}[t!]
		\includegraphics[width=0.47\textwidth,right]{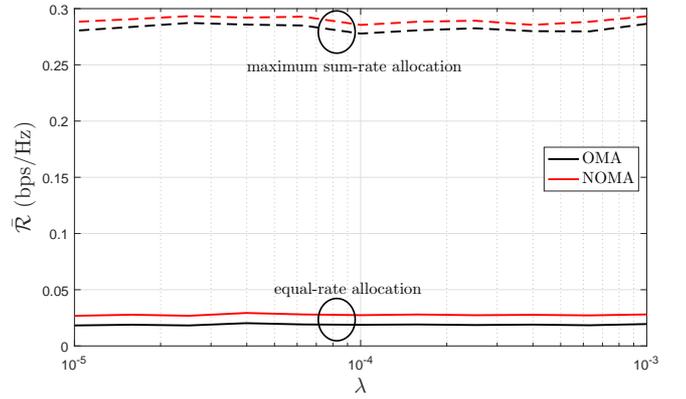}
		\caption{Average rate for OMA and NOMA setups operating with equal-rate and maximum sum-rate configurations and $\epsilon_i=10^{-2}$.}		
		\label{Fig6}
	\end{figure}
\section{Numerical Results}\label{results}
Herein we compare numerically the performance of NOMA and OMA setups under the proposed distributed rate control methodology. We simulate completely random scenarios by generating 50000 instances (Monte Carlo runs) of a PPP  and a sufficiently large area such that 1000 BSs are placed on average. Since the average performance of NOMA and OMA setups do not depend on $\lambda$, we are going to use a generic value of $\lambda=10^{-4}/\mathrm{m}^2$ $(100/\mathrm{km}^2)$. Unless stated otherwise, results are obtained by setting  $\alpha=4$ and $\mu=0.1$. We utilize the optimum values of $\beta$ given in Section~\ref{power_e} and Section~\ref{power_s} for equal-rate and maximum sum-rate allocations, respectively. 
%
\begin{table*}[!t]
	\centering
	\caption{Average Transmission Rate for Benchmark OMA and NOMA schemes.}
	\begin{tabular}{c|c|c}
		\toprule
		\textbf{Scheme} & \textbf{Equal-rate} & \textbf{Maximum sum-rate} \\ \bottomrule
		OMA  & $\mathbb{E}\Big[\frac{1}{2}\log_2\big(1+\min(h_1r_1^{-\alpha} P_T/I_1,h_2r_2^{-\alpha} P_T/I_2)\big)\Big|\Phi\Big]$ & $\mathbb{E}\Big[\frac{1}{2}\log_2(1\!+\!h_1r_1^{-\alpha} P_T/I_1)(1\!+\!h_2r_2^{-\alpha} P_T/I_2)\Big|\Phi\Big]$ \\[8pt] 
		NOMA  & $\mathbb{E}\Big[\log_2(1\!+\!\mathrm{SIR}_1)$ with $1/2\!\le \!\beta\!\le\! 1\!:$ $\mathrm{SIR}_1\!=\!\mathrm{SIR}_2\Big|\Phi\Big]$ & $\mathbb{E}\Big[\max_{1/2\le \beta\le 1}\log_2(1+\mathrm{SIR}_1)(1+\mathrm{SIR}_2)\Big|\Phi\Big]$ \\
		\bottomrule
	\end{tabular}\label{table2}
\end{table*}

Additionally, the performance  when perfect CSI and knowledge of instantaneous levels of interference at the $\mathrm{UE}s$ are fully available at the BS side, is illustrated as benchmark. As in \cite{Zhang.2016}, the decoding order is now established for holding $h_1r_1^{-\alpha}/I_1<h_2r_2^{-\alpha}/I_2$. Since $h_i$ and $I_i$ are deterministic values in such scenario, the rate can be allocated directly from \eqref{eq4} for each channel realization as $2^{\mathrm{SIR}_i}-1$. According to Shannon, this rate allows operating with an arbitrarily small error probability as the blocklength goes to infinity \cite{Polyanskiy.2010}\footnote{The trade-off between the blocklength and the error probability was analytically investigated in \cite{Polyanskiy.2010} but herein we assume the asymptotic scenario for which the blocklength is infinity, thus $\epsilon_i\rightarrow 0$ if the transmission rate is set to be $2^{\mathrm{SIR}_i}-1$ for each channel realization.}.  Finally, the average allocated transmission rates under the benchmark NOMA scheme, and also under its OMA counterpart, appear shown in Table~\ref{table2} for equal-rate and maximum sum-rate problems.
\begin{figure}[t!]
	\includegraphics[width=0.47\textwidth,right]{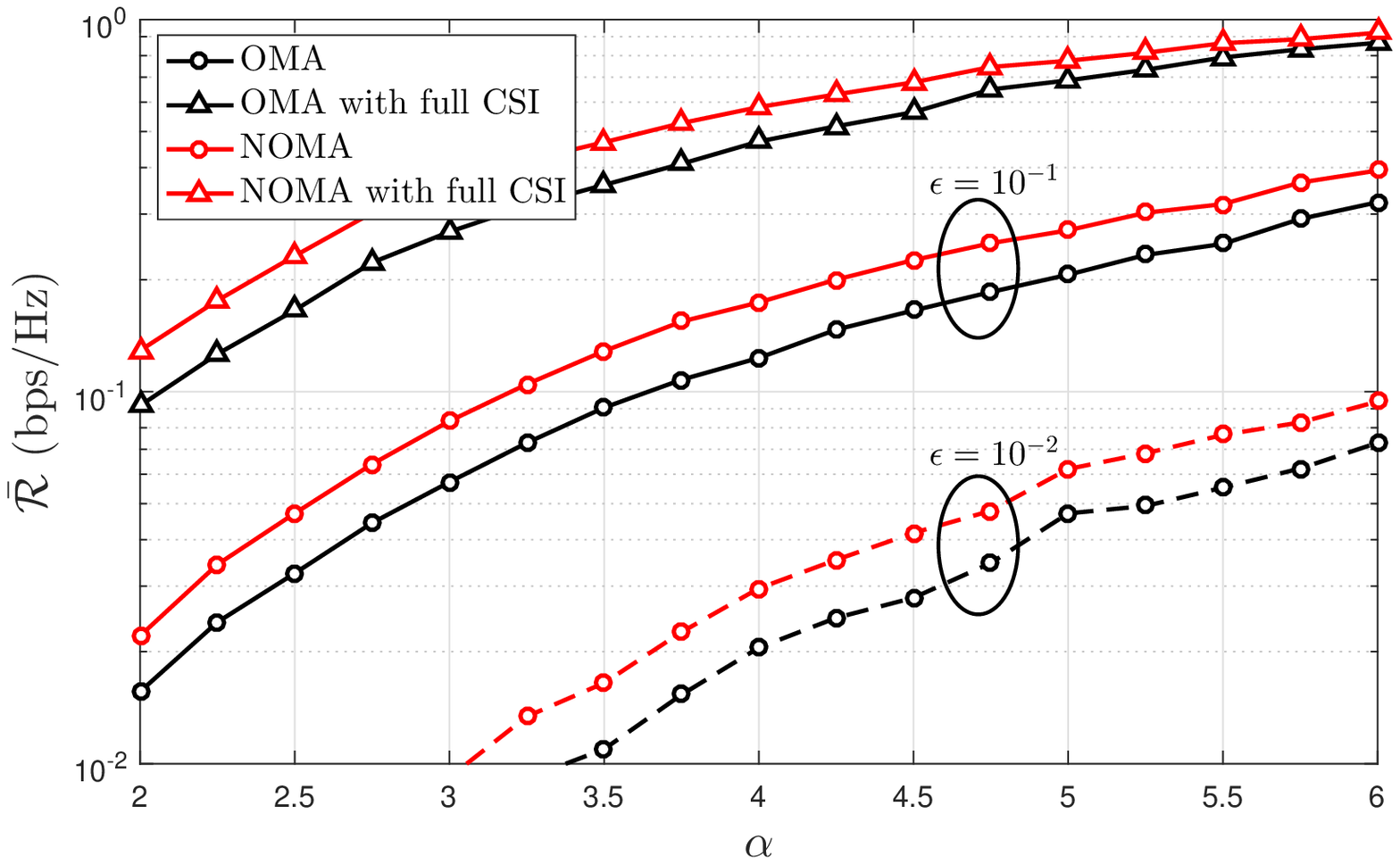}\vspace{2mm}\\ 
	\includegraphics[width=0.47\textwidth,right]{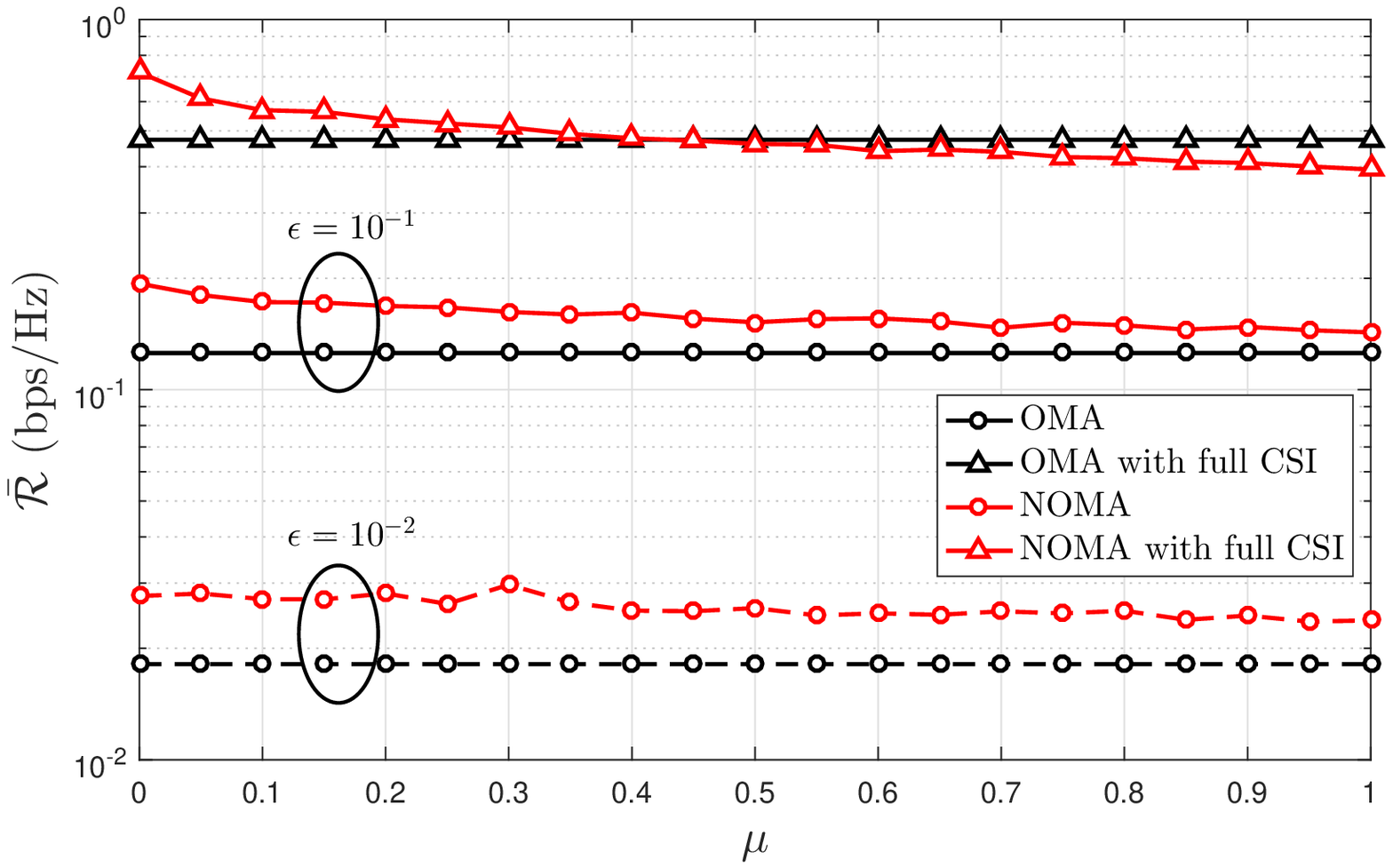}
	\caption{Average rate for OMA and NOMA setups as a function of (a) $\alpha$ (top), (b) $\mu$ (bottom), for equal-rate allocation and $\epsilon_i\in\{10^{-1},10^{-2}\}$.}
	\label{Fig7}
\end{figure}
\subsection{Equal-rate allocation}\label{subsub}
Fig.~\ref{Fig7} shows the average rate for OMA and NOMA setups when operating with equal-rate allocation, and as a function of $\alpha$ (Fig.~\ref{Fig7}a) and $\mu$ (Fig.~\ref{Fig7}b). 
Notice that for our CSI-free schemes the average performance of NOMA overcomes always the OMA's, and the gap increases for smaller SIC imperfection $\mu$ and/or path-loss exponent\footnote{This behavior can be corroborated by checking that indeed $\frac{d}{d\alpha}\big(\frac{\varphi_2^*}{\varphi_1^*}\big)<0$ in most of the PPP realizations for $\epsilon_1=\epsilon_2$. Notice that as $\frac{\varphi_2^*}{\varphi_1^*}$ decreases NOMA loses its advantages over OMA as discussed after Corollary~\ref{re2}.} $\alpha$.   
Meanwhile, NOMA under full CSI at the BS is more sensitive to $\mu$, and for $\mu>0.4$ its OMA counterpart performs better. These benchmark schemes provide an upper-bound performance not only because full CSI is considered at the BS, but also because only when transmitting over an infinite blocklength their results hold. Therefore and as expected, it is shown a large gap between these bounds and the performance of the OMA and NOMA schemes discussed in this work, and increases as the reliability constraint becomes more stringent. 
For all the schemes the performance improves when the path-loss exponent increases, which is expected according to our discussions in Subsection~\ref{dSIR}. In a nutshell, since interfering signals endure larger distances, as $\alpha$ increases their impact is affected more than the receive power of the desired signal itself, hence, larger values of $\mathrm{SIR}$ are more likely. 
Additionally, the performance of OMA does not depend on $\mu$, thus, it appears as a straight line in Fig.~\ref{Fig7}b,  and remarkably even in the worst case of $\mu=1$ NOMA attains a greater average performance than OMA when operating without CSI at the BS.
Notice that $\mu=1$ implies only that the interference coming from  UEs that come first in the decoding order cannot be canceled. Therefore, this affects only the denominator of \eqref{eq4}, and the UE being currently decoded has still chances of succeeding.
Obviously, the OMA setup could be optimized such that different partitions of time/frequency resources are allocated to each user, thus improving its performance and even overcoming NOMA. However, notice that this is extremely cumbersome since involves synchronization tasks if the split resource is time, and it could be even unfeasible if the resource is the frequency bandwidth. Meanwhile, NOMA only requires to use a different power level for the signal of each UE.
\begin{figure}[t!]
	\includegraphics[width=0.47\textwidth,right]{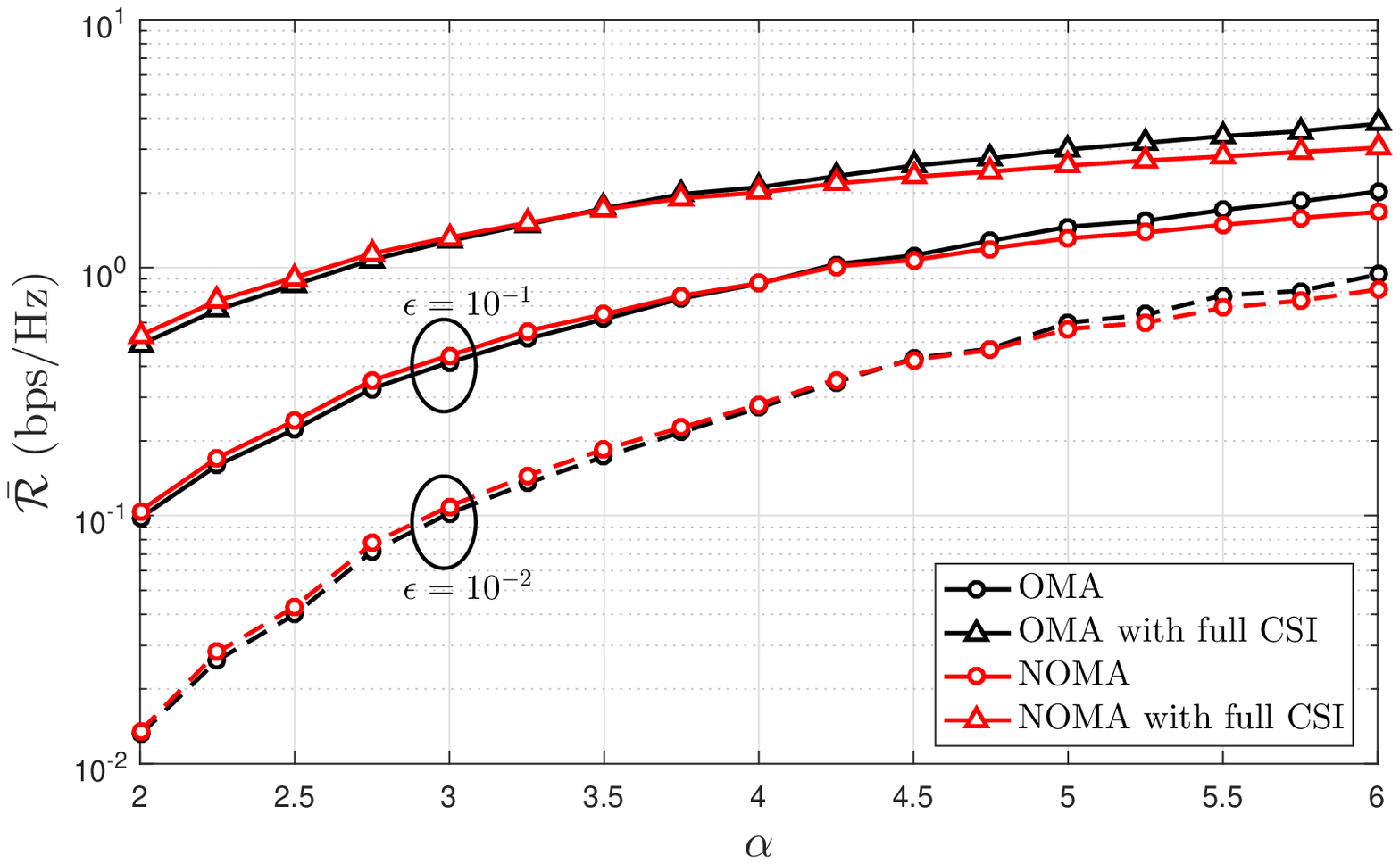}\vspace{2mm}\\
	\includegraphics[width=0.47\textwidth,right]{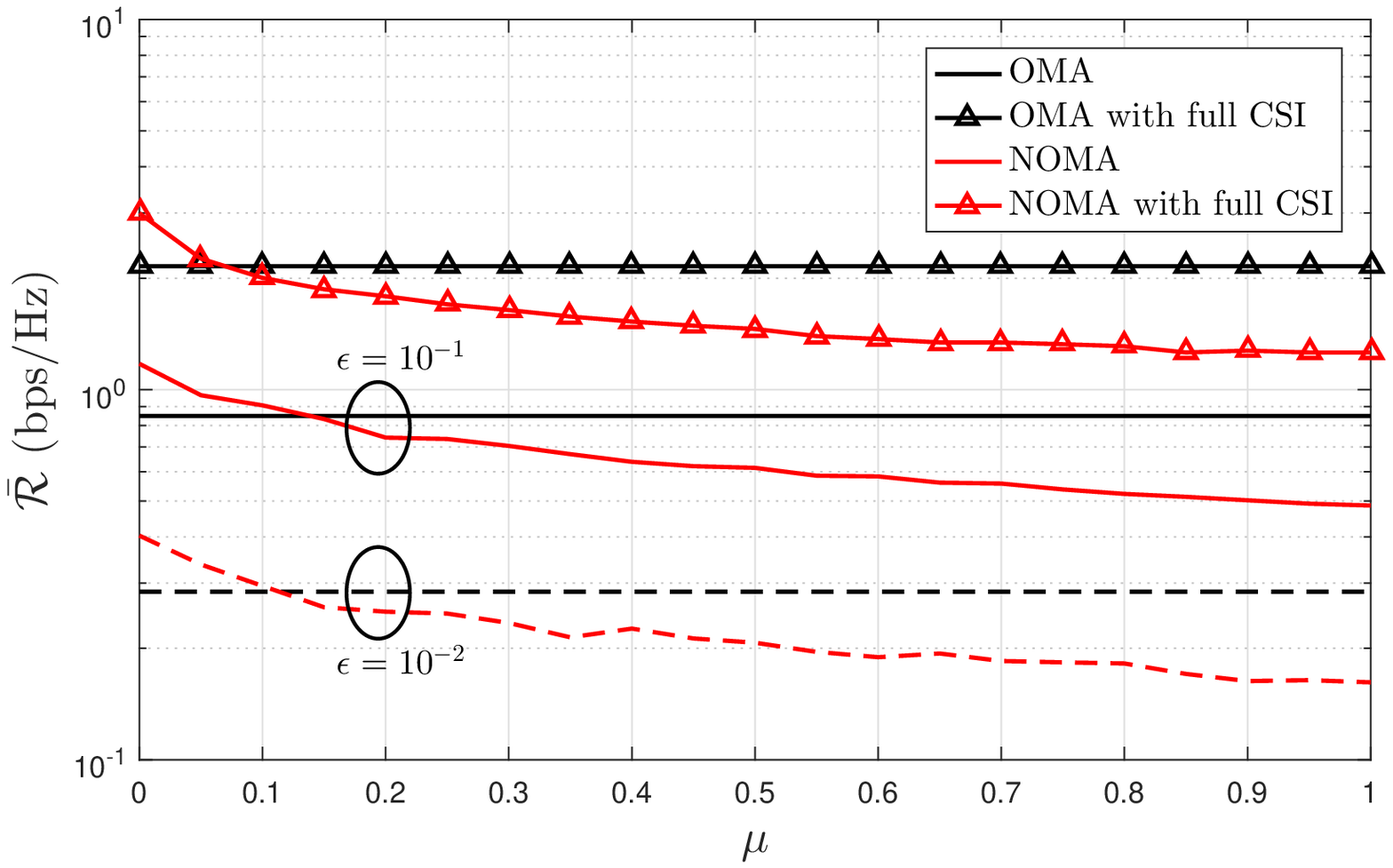}
	\caption{Average rate for OMA and NOMA setups as a function of (a) $\alpha$ (top), (b) $\mu$ (bottom), when operating with maximum sum-rate allocation and $\epsilon_i\in\{10^{-1},10^{-2}\}$.	}	
	\label{Fig8}
\end{figure}
\subsection{Maximum sum-rate allocation}
Fig.~\ref{Fig8} shows the average rate for OMA and NOMA setups when operating with maximum sum-rate allocation, and as a function of $\alpha$ (Fig.~\ref{Fig8}a) and $\mu$ (Fig.~\ref{Fig8}b). For this case the performance of OMA improves faster than the NOMA's when $\alpha$ increases, although when reliability constraint becomes more stringent both ideal CSI-free schemes tend to have the same improvement rate with respect to $\alpha$. Notice that it is required that $\mu\lesssim 0.1$ ($\lesssim 10\%$ of the co-interference is not canceled) so that NOMA is preferable to OMA. This is illustrated in Fig.~\ref{Fig8}b, where the gains of NOMA over OMA are of $39\%$, $38\%$ and $42\%$ when $\mu\rightarrow 0$ for the benchmark and CSI-free schemes with $\epsilon_i=10^{-1}$ and $\epsilon_i=10^{-2}$, respectively; while when $\mu\rightarrow 1$, OMA offers a performance gain around of $72\%$, $81\%$ and $71\%$. Therefore, failing to efficiently  eliminate the co-interference is critical for NOMA, and can be a  challenging issue when implementing it in practice. Finally and as commented in the Section~\ref{subsub}, benchmark schemes provide a loose upper-bound performance which is impossible to attain for any practical communication system.
\section{Conclusion}\label{conclusions}
In this paper, we proposed a rate allocation scheme for a downlink NOMA cellular system operating with reliability constraints.  
The allocated rate depends on: \textit{i)} the average receive power from the desired signal and from the interfering BSs; and \textit{ii)} the reliability constraint. This information is usually easy to get and therein lies the main advantage of the scheme. Additionally, we characterize analytically the distribution of the allocated SIR threshold in Poisson cellular networks in order to meet the link's reliability constraint. We derive and analyze the main conditions that are necessary so that NOMA overcomes the OMA alternative, while we discuss the optimum allocation strategies, e.g., rate and power allocation profile and optimum decoding order,  for the 2-UEs NOMA setup with equal-rate and maximum sum-rate goals. 
We compare numerically the performance of our allocation scheme with its ideal counterpart requiring full CSI at the BSs and infinitely long blocklength, and show how the gap increases as the reliability constraint becomes more stringent.
Also, results highlight the benefits of NOMA over OMA when the co-interference can be efficiently canceled, e.g., with an efficiency of $90\%$, specially when we aim to maximize the sum-rate.
\appendices 
\section{Proof of Lemma~\ref{the1}}\label{App_A}
	Setting the conditional link success probability to $1-\epsilon_i$, we have
\begin{align}
1-\epsilon_i&=\mathbb{P}\left(\frac{h_ir_i^{-\alpha}P_i}{h_ir_i^{-\alpha}\Big[\mu\sum\limits_{j=1}^{i-1}P_j+\sum\limits_{j=i+1}^{M}P_j\Big]+I_i}>\gamma_i\Bigg|\Phi\right)\nonumber\\
&=\mathbb{P}\left(h_i>\frac{I_ir_i^{\alpha}}{\frac{P_i}{\gamma_i}-\mu\sum\limits_{j=1}^{i-1}P_j-\sum\limits_{j=i+1}^{M}P_j}\Bigg|\Phi\right)\nonumber\\
&\stackrel{(a)}{=}\mathbb{E}\left[\mathbb{P}\left(h_i>\frac{r_i^{\alpha}P_T\sum\limits_{j\in\Phi\backslash \{b_0\}}g_jr_{j,i}^{-\alpha}}{\frac{P_i}{\gamma_i}-\mu\sum\limits_{j=1}^{i-1}P_j-\sum\limits_{j=i+1}^{M}P_j}\right)\Bigg|\Phi\right]\nonumber\\
&\stackrel{(b)}{=}\mathbb{E}\left[\exp\left(-\frac{r_i^{\alpha}P_T\sum\limits_{j\in\Phi\backslash \{b_0\}}g_jr_{j,i}^{-\alpha}}{\frac{P_i}{\gamma_i}-\mu\sum\limits_{j=1}^{i-1}P_j-\sum\limits_{j=i+1}^{M}P_j}\right)\Bigg|\Phi\right]\nonumber\\
&\stackrel{(c)}{=}\mathbb{E}\left[\prod_{j\in \Phi\backslash \{b_0\}}\exp\left(-\varphi_i r_i^{\alpha}g_jr_{j,i}^{-\alpha}\right)\Big|\Phi\right]\nonumber\\
&\stackrel{(d)}{=}\prod_{j\in \Phi\backslash \{b_0\}}\frac{1}{1+\varphi_i r_i^{\alpha}r_{j,i}^{-\alpha}},
\end{align}
where $(a)$ comes from using \eqref{Ii}, $(b)$ from using the complementary CDF (CCDF) of $h_i$, while $(c)$ from using the algebraic transformation $\exp\big({\sum_{i=1}^{n}k_i}\big)=\prod_{i=1}^{n}\exp({k_i})$ and setting
\begin{align}
\varphi_i=\frac{P_T}{\frac{P_i}{\gamma_i}-\mu\sum\limits_{j=1}^{i-1}P_j-\sum\limits_{j=i+1}^{M}P_j}.\label{eqGP}
\end{align}
Finally, $(d)$ comes from 	 averaging over the fading $g_j$, attaining  \eqref{eqP}. By isolating $\gamma_i$ in \eqref{eqGP} we reach \eqref{gami}. Notice that \eqref{eqP} may have up to $|\Phi\backslash\{b_0\}|$ solutions; however $\gamma_i>0$ then $\varphi_i^*$ has to be also real positive according to \eqref{gami}. Now, the left term of \eqref{gami} is monotonically  increasing on $\varphi_i$ and both left and right terms share the same domain $(0,\infty)$ $\forall\varphi_i>0,0<\epsilon_i<1$, hence  \eqref{gami} has only one real positive solution $\varphi_i^*$.
  \hfill\qedsymbol
\section{Proof of Theorem~\ref{the2}}\label{App_B}
Let's consider that the number of interfering BSs is finite and equal to $n$ and we sort in ascending order the interfering BSs according to their distance to $\mathrm{UE}_i$. Hence, $r_{j,i}, j\ge 1,$ denotes the distance from $\mathrm{UE}_i$ to its $j-$nearest interfering BS. Notice that by letting $n\rightarrow\infty$ we are also able to model an infinite network deployment. Now, by using the relation between the geometric and the arithmetic mean we have that
\begin{align}
\Big(\prod_{j=1}^n(1+\varphi_i r_i^{\alpha}r_{j,i}^{-\alpha})\Big)^{\frac{1}{n}}&\le \frac{1}{n}\sum_{j=1}^{n}(1+\varphi_i r_i^{\alpha}r_{j,i}^{-\alpha})\nonumber\\	
\prod_{j=1}^{n}(1+\varphi_i r_i^{\alpha}r_{j,i}^{-\alpha})&\le\bigg[\frac{1}{n} (n+\varphi_i r_i^{\alpha}\sum_{j=1}^{n}r_{j,i}^{-\alpha})\bigg]^{n}\nonumber\\
\prod_{j=1}^{n}(1+\varphi_i r_i^{\alpha}r_{j,i}^{-\alpha})&\le\bigg[ 1+\frac{\varphi_i r_i^{\alpha}}{n}\sum_{j=1}^{n}r_{j,i}^{-\alpha}\bigg]^{n}.\label{mean}	
\end{align}
\begin{figure}[t!]
	\includegraphics[width=0.47\textwidth,right]{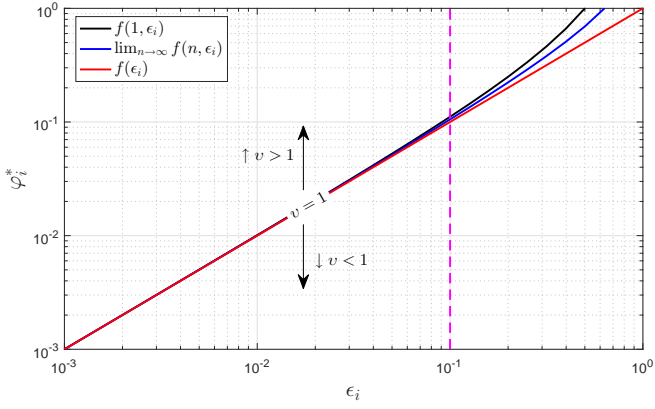}
	\caption{$\varphi_i^*$ as a function of $\epsilon_i$ and according to \eqref{ap}. We set $\upsilon=r_i^{-\alpha}/\sum\limits_{j=1}^{n}r_{j,i}^{-\alpha}$.}
	\label{Fig9}
\end{figure}
By using the right term of \eqref{mean} instead of the left one when solving \eqref{eqP} for $\varphi_i$ yields
\begin{align}
\varphi_i^*\approx\frac{n r_i ^{-\alpha}}{\sum\limits_{j=1}^{n}r_{j,i}^{-\alpha}}\Big[\Big(\frac{1}{1-\epsilon_i}\Big)^{\frac{1}{n}}-1\Big].\label{appB}
\end{align}
By looking at \eqref{eqP} notice that for small $\epsilon_i$, e.g., $\epsilon_i\le 10^{-1}$, $\frac{1}{1-\epsilon_i}$ becomes closer to $1$ and since each of the terms $(1+\hat{\varphi}_i r_i^{\alpha}r_{j,i}^{-\alpha})$ are strictly greater that $1$, we can expect that each of them approximates heavily to the unity, hence, all of these terms are very similar between each other. Since geometric and arithmetic means match when all the operating terms are equal, we can be sure that for small $\epsilon_i$, e.g., $\epsilon_i\le 10^{-1}$, the approximation given in \eqref{appB} is accurate.

Now we can write $\varphi_i^*$ as
\begin{align}
\varphi_i^*\approx\frac{r_i ^{-\alpha}}{\sum\limits_{j=1}^{n}r_{j,i}^{-\alpha}}f(n,\epsilon_i),\label{ap}
\end{align}
where $f(n,\epsilon_i)\!=\!n\big((1\!-\!\epsilon_i)^{-\frac{1}{n}}\!-\!1\big)$ is a decreasing function of $n$ since $\frac{\partial f(n,\epsilon_i)}{\partial n}\!=\!\frac{1}{(1\!-\!\epsilon_i)^n}\!+\!\frac{\ln(1\!-\!\epsilon_i)}{n(1\!-\!\epsilon_i)^n}\!-\!1\!<\!0$, $\forall \epsilon_i: 0\!<\!\epsilon_i\!<\!1$. Therefore, its maximum and minimum values are when $n=1$ and $n\rightarrow\infty$, respectively, for which $f(1,\epsilon_i)=\epsilon_i/(1-\epsilon_i)$ and $\lim_{n\rightarrow\infty}f(n,\epsilon_i)=-\ln(1-\epsilon_i)$. Interestingly, when $\epsilon_i\le 10^{-1}$ both
results are very close to each other, which leads to the conclusion that $f(n,\epsilon_i)$ depends very weakly on $n$ in that region. In fact, following result holds
\begin{align}
f(n,\epsilon_i)\approx f(\epsilon_i)=\epsilon_i\label{f}
\end{align}
hence $\varphi_i^*$ can be written as in \eqref{app}. The accuracy of \eqref{f} when $\epsilon_i\le 10^{-1}$ can be easily appreciated in Fig.~\ref{Fig9}. Notice that the value of $\upsilon=r_i^{-\alpha}/\sum\limits_{j=1}^{n}r_{j,i}^{-\alpha}$ only causes a vertical shift of the curve $\varphi_i^*$ vs $\epsilon_i$.
 \hfill\qedsymbol
\section{Proof of Theorem~\ref{the3}}\label{App_C}
Using \eqref{app} we proceed as follows
\begin{align}
F_{\varphi_i^*}(\omega)&=\mathbb{P}\Bigg(\frac{r_i ^{-\alpha}}{\sum\limits_{j\in\Phi\backslash\{b_0\}}r_{j,i}^{-\alpha}}\epsilon_i<\omega\Bigg)\nonumber\\
&=\mathbb{P}\Bigg(\sum\limits_{j\in\Phi\backslash\{b_0\}}\Big(\frac{r_i}{r_{j,i}}\Big)^{\alpha}>\frac{\epsilon_i}{\omega}\Bigg)\nonumber\\
&=\mathbb{P}\Bigg(\sum\limits_{x\in\Phi\backslash\{x: ||x||<r_i\}}\Big(\frac{r_i}{||x||}\Big)^{\alpha}>\frac{\epsilon_i}{\omega}\Bigg)\nonumber\\
&=1-F_{\Psi}\Bigg(\frac{\epsilon_i}{\omega}\Bigg),\label{f1}
\end{align}
where $\Psi=\sum\limits_{x\in\Phi\backslash\{x: ||x||<r_i\}}\big(\frac{r_i}{||x||}\big)^{\alpha}$. Similar to \cite[Def. 1]{Ganti2.2016}	we define the squared relative distance process (SRDP) of the point process $\Phi$ as $\Theta=\big\{x\in\Phi\backslash\{x: ||x||<r_i\}: \frac{||x||^2}{r_i^2}\big\}$, then
\begin{align}
\Psi=\sum\limits_{x\in\Theta}x^{-1/\delta},\label{Psi}
\end{align}
with $\delta=2/\alpha$. 
Now, the probability generating functional (PGFL) of the SRDP is given by
\begin{align}
G_{\Theta}[f]&=\mathbb{E}\bigg[\prod_{x\in\Theta}f(x)\bigg]=\mathbb{E}\bigg[\prod_{x\in\Phi}f\Big(\frac{||x||^2}{r_i^2}\Big)\bigg]\nonumber\\
&\stackrel{(a)}{=}\int\limits_0^{\infty}\exp\Big(\!\!-\!2\lambda\pi\int\limits_{r_i}^{\infty}\eta(1-f(\eta^2/r_i^2))\mathrm{d}\eta\Big)f_{r_i}(r_i)dr_i\nonumber\\
&\stackrel{(b)}{=}2\pi\lambda\int\limits_0^{\infty}r_i\exp\bigg(\!\!-\!\lambda\pi r_i^2\Big(1+\int\limits_{1}^{\infty}(1-f(y))\mathrm{d}y\Big)\bigg)dr_i\nonumber\\
&\stackrel{(c)}{=}\frac{1}{1\!+\!\int_{1}^{\infty}\!\big(1\!-\!f(y)\big)dy},\label{PGFL}
\end{align}
where $(a)$ follows from the PGFL of the PPP, $(b)$ is obtained by the substitution $\eta^2/r_i^2\rightarrow y$ and using the distribution of the nearest neighbor distance, while $(c)$ follows from solving the outer integral. 
%
%
%
	Then,
	\begin{align}
	\mathcal{L}_{\Psi}(s)&=\frac{1}{1+\int_{1}^{\infty}\Big(1-\exp\big({-sx^{-1/\delta}}\big)\Big)dy}\nonumber\\
	&=\frac{1}{\ _1F_1(-\delta,1-\delta,-s)},
	\end{align}
	%
\begin{align}
F_{\Psi}(x)&=\mathcal{L}^{-1}\bigg\{\frac{1}{s}\mathcal{L}_{\Psi}(s)\bigg\}(x)\nonumber\\
&=\mathcal{L}^{-1}\bigg\{\frac{1}{s\ _1F_1(-\delta,1-\delta,-s)}\bigg\}(x),\label{f2}
\end{align}
while plugging \eqref{f2} into \eqref{f1} and then using the variable transformation given in \eqref{eqGP} we attain \eqref{eqth3}. Notice that although \eqref{eqth3} is general, it is not in closed-form. However, by noticing that $\varphi_i^*$ in \eqref{app} has the form of the SIR of a typical link in a Poisson network without fading, but scaled by the reliability constraint $\epsilon_i$, we can take advantage of \cite[Eq. 6]{Zhang.2014} to write
\begin{align}
F_{\varphi_i^*}(\omega)&=1-\mathrm{sinc}(\delta)\Big(\frac{\omega}{\epsilon_i}\Big)^{-\delta},\qquad \mathrm{for}\ \omega\ge\epsilon_i,\label{bound}
\end{align}
which also works as a lower-bound when $\omega<\epsilon_i$, and for the special case of $\alpha=4$ yields $F_{\varphi_i^*}(\omega)=1-1/\big(\Gamma(1.5)\sqrt{\pi\omega/\epsilon_i}\big)$. Finally, using \eqref{bound} along with the variable transformation given in \eqref{eqGP} we attain \eqref{eqth4}. \hfill\qedsymbol
\section{Proof of Theorem~\ref{the4}}\label{App_D}
Using \eqref{gami} and the fact that $\gamma_1=\gamma_2$ we have that
\begin{align}
\frac{\varphi_1^*\beta}{1+(1-\beta)\varphi_1^*}-\frac{\varphi_2^*(1-\beta)}{1+\beta\mu\varphi_2^*}&=0\nonumber\\
\varphi_1^*\varphi_2^*(1\!-\!\mu)\beta^2\!-\!(\varphi_1^*\!+\!\varphi_2^*\!+\!2\varphi_1^*\varphi_2^*)\beta\!+\!\varphi_2^*(1\!+\!\varphi_1^*)\!&=\!0,
\end{align}
and \eqref{bet1} comes easy by solving this quadratic equation for $\beta$ while discarding the solution that leads to $\beta>1$. The reader can check that $\beta$ in \eqref{bet1} always lies in $[0,\ 1]$. However, the other necessary condition for feasibility is that $\beta\ge 1/2$, thus
\begin{align}
\frac{\varphi_1^*+\varphi_2^*+2\varphi_1^*\varphi_2^*}{2(1-\mu)\varphi_1^*\varphi_2^*}+\qquad\qquad\qquad\qquad\qquad\qquad\qquad &\nonumber\\
-\frac{\sqrt{(\varphi_1^*+\varphi_2^*)^2+4\varphi_1^*\varphi_2^*(\varphi_1^*+\mu\varphi_2^*+\mu\varphi_1^*\varphi_2^*)}}{2(1-\mu)\varphi_1^*\varphi_2^*}&\ge\frac{1}{2}\nonumber\\
\varphi_1^*\varphi_2^*\!-\!2\mu\varphi_1^*\varphi_2^*\!+\!\mu^2\varphi_1^*\varphi_2^*\!-\!2\varphi_1^*\!+\!2\mu\varphi_1^*\!+\!2\varphi_2^*\!-\!2\mu\varphi_2^*\!&\ge\! 0\nonumber\\
\qquad\qquad \frac{2\varphi_1^*(1-\mu)}{2-2\mu+\varphi_1^*-2\mu\varphi_1^*+\mu^2\varphi_1^*}&\le \varphi_2^*\nonumber\\
\qquad\qquad \frac{2\varphi_1^*(1-\mu)}{2(1-\mu)+\varphi_1^*(1-\mu)^2}&\le\varphi_2^*\nonumber\\
\qquad\qquad\frac{2\varphi_1^*}{2+\varphi_1^*(1-\mu)}&\le  \varphi_2^*.\label{str}
\end{align}
Notice that when $\mu$ increases, the required $\varphi_2^*$ to ensure similar rate performance for both UEs, increases. This is, the worse the SIC performs, the stringent the resource reservation for $\mathrm{UE}_2$. This is also aligned with the fact that $\beta$ in \eqref{bet1} is a decreasing function of $\mu$.

Continuing with the proof, $\varphi_1^*$ and $\varphi_2^*$ are independent since each of them only depends on the distances from the UEs to the transmitting and interfering BSs, path-loss exponent and reliability constraints of the UEs (See \eqref{eqP}). Let the UEs with decoding order 1 and 2, be denoted as $A$ and $B$, respectively, thus, $\varphi_1^*=\varphi_A^*$ and $\varphi_2^*=\varphi_B^*$. If this assignment already satisfies \eqref{str}, e.g., $\varphi_B^*\ge\frac{2\varphi_A^*}{2+\varphi_A^*(1-\mu)}$, then, the power allocation profile given in \eqref{bet1} is feasible and both UEs can attain the same rate performance. In the opposite case we have that
	\begin{align}
	\varphi_B^*&<\frac{2\varphi_A^*}{2+\varphi_A^*(1-\mu)}\nonumber\\
	\varphi_A^*&>\frac{2\varphi_B^*}{2-\varphi_B^*(1-\mu)}\ge\frac{2\varphi_B^*}{2+\varphi_B^*(1-\mu)}.
	\end{align}
	Therefore, if we scheduled the UEs such that $\varphi_1^*=\varphi_B^*$ and $\varphi_2^*=\varphi_A^*$ we can also ensure the same rate performance for both UEs.  \hfill\qedsymbol
\section{Proof of Corollary~\ref{cor2}}\label{App_E}
By setting $S=\varphi_1^*+\varphi_2^*$ and $P=\varphi_1^*\varphi_2^*$, which are fixed-value terms and independent of the decoding order, we can write \eqref{gam} as
\begin{align}
\gamma_1=\frac{\sqrt{S^2+4Pk}-S}{2k},
\end{align}
where $k=\varphi_1^*+\mu\varphi_2^*+\mu P$. Now we have that
\begin{align}
\frac{d\gamma_1}{dk}&=\frac{S\sqrt{S^2+4Pk}-(S^2+2PK)}{2k^2\sqrt{S^2+4Pk}}<0,
\end{align}
which is negative since
\begin{align}
4P^2k^2&> 0\nonumber\\
4P^2k^2+S^4+4PS^2k&> S^4+4PS^2k\nonumber\\
(S^2+2Pk)^2&> S^2(S^2+4Pk)\nonumber\\
S^2+2Pk&>S\sqrt{S^2+4Pk},
\end{align}
therefore, $\gamma_1$ is a decreasing function of $k$ and gets the greater value for the smaller $k$, which occurs when we choose $\varphi_2^*\ge \varphi_1^*$. 
Also, notice that with such scheduling the operation is guaranteed since \eqref{str} is satisfied $\big(\varphi_2^*\ge \varphi_1^*\ge \frac{2\varphi_1^*}{2+\varphi_1^*(1-\mu)}\big)$.  \hfill\qedsymbol
\section{Proof of Corollary~\ref{pro1}}\label{App_F}
We proceed as follows
\begin{align}
\tilde{\gamma}=(1+\gamma_1)(1+\gamma_2)&\stackrel{(a)}{\lesssim}\bigg[\frac{(1+\gamma_1)+(1+\gamma_2)}{2}\bigg]^2\nonumber\\
&= \Big[1+\frac{\gamma_1+\gamma_2}{2}\Big]^2, \label{gb}
\end{align}
where $(a)$ comes from using the inequality  between arithmetic and geometric means. By using \eqref{gb} as an approximation and setting $\bar{\gamma}=\gamma_1+\gamma_2$ we attain \eqref{gamB}.

Now let's present some insights on the accuracy of \eqref{gamB}. The relative error function
\begin{align}
\xi(\%)&=100\times\frac{\Big[1+\frac{1}{2}\bar{\gamma}\Big]^2-\tilde{\gamma}}{\tilde{\gamma}}\nonumber\\
&=25\times\Big[\frac{1+\gamma_1}{1+\gamma_2}+\frac{1+\gamma_2}{1+\gamma_1}\Big]-50,
\end{align} 
measures the error when adopting the approximation given in \eqref{gamB}, and it is plotted in Fig.~\ref{Fig10}. As expected, the approximation given in \eqref{gamB} is more accurate when $\gamma_1\approx\gamma_2$ and/or $\gamma_1,\gamma_2\ll 1$. 
The latter condition is typical of scenarios where devices are with low transmission rates because of the short length of their packets and/or stringent reliability constraints, e.g., MTC setups. In those scenarios $\max\limits_{\beta}\tilde{\gamma}\approx\max\limits_{\beta}\bar{\gamma}$.  \hfill\qedsymbol
\begin{figure}[t!]
	\includegraphics[width=0.47\textwidth,right]{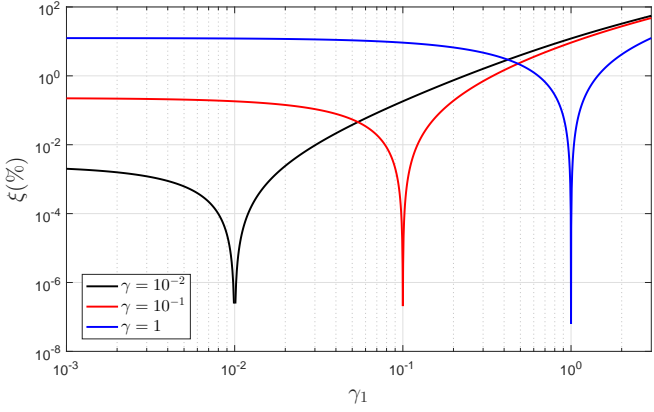}
	\caption{Relative error function, $\xi (\%)$, as a function of $\gamma_1$ with $\gamma_2\in\{10^{-2},10^{-1},1\}$.}		
	\label{Fig10}
\end{figure}
\section{Proof of Theorem~\ref{the5}}\label{App_G}
	We have that
\begin{align}
\bar{\gamma}&=\gamma_1+\gamma_2,\nonumber\\
\frac{d\bar{\gamma}}{d\beta}&=\frac{d\gamma_1}{d\beta}+\frac{d\gamma_2}{d\beta}\nonumber\\
&=\frac{\varphi_1^*(1+\varphi_1^*)}{(1+\varphi_1^*(1-\beta))^2}-\frac{\varphi_2^*(1+\mu\varphi_2^*)}{(1+\mu\beta\varphi_2^*)^2},\\
\frac{d^2\bar{\gamma}}{d\beta^2}&=\frac{d^2\gamma_1}{d\beta^2}+\frac{d^2\gamma_2}{d\beta^2}\nonumber\\
&=\frac{2\varphi_1^{*2}(1+\varphi_1^*)}{(1+(1-\beta)\varphi_1^*)^3}+\frac{2\mu\varphi_2^{*2}(1+\mu\varphi_2^*)}{(1+\mu\beta\varphi_2^*)^3}.
\end{align}	
Since $\frac{d^2\bar{\gamma}}{d\beta^2}>0\ \forall\beta$, then $\bar{\gamma}$ is convex on $\beta$ and a global minimum exists. That means that the maximum lies on the interval extremes, $\beta^*=1/2$ or $\beta^*=1$. Thus,
\begin{align}
\max\limits_{\beta}\bar{\gamma}&=\max\Big(\bar{\gamma}|_{\beta=\tfrac{1}{2}},\bar{\gamma}|_{\beta=1}\Big)\nonumber\\
&=\max\bigg(\frac{\frac{1}{2}\varphi_1^*}{1+\frac{1}{2}\varphi_1^*}+\frac{\frac{1}{2}\varphi_2^*}{1+\frac{1}{2}\mu\varphi_2^*},\varphi_1^*\bigg)\nonumber\\
&=\max\bigg(\frac{\varphi_1^*}{2+\varphi_1^*}+\frac{\varphi_2^*}{2+\mu\varphi_2^*},\varphi_1^*\bigg).
\end{align}
Then,
\begin{align}
\frac{\varphi_1^*}{2+\varphi_1^*}+\frac{\varphi_2^*}{2+\mu\varphi_2^*}&\stackequal{\beta^*=1/2}{\beta^*=1}\varphi_1^*\nonumber\\
\frac{2(\varphi_2^*-\varphi_1^*)+\varphi_1^*(\varphi_2^*-2)}{\varphi_1^*\varphi_2^*(1+\varphi_1^*)}&\stackequal{\beta^*=1/2}{\beta^*=1}\mu,
\end{align}
which implies \eqref{eqBo}.  \hfill\qedsymbol
\section{Proof of Theorem~\ref{the6}}\label{App_H}
For the maximum sum-rate setup, the NOMA scheme overcomes the OMA configuration when
\begin{align}
\log_2\tilde{\gamma}&>\frac{1}{2}\log_2(1+\varphi_1^*)+\frac{1}{2}\log_2(1+\varphi_2^*)\nonumber\\
&=\log_2\sqrt{(1+\varphi_1^*)(1+\varphi_2^*)}\nonumber\\
\tilde{\gamma}&>\sqrt{(1+\varphi_1^*)(1+\varphi_2^*)}.\label{gamCor}
\end{align}
We know that $\varphi_2^*\ge\varphi_1^*$, and first we are going to work on that to proof the statement of the theorem for $\mu=0$. We proceed as follows
\begin{align}
\varphi_2^*&\ge \varphi_1^*\qquad|\times\varphi_1^*,\ \ \  (\varphi_1^*\!>\!0)\!\nonumber\\
\varphi_1^*\varphi_2^*&\ge \varphi_1^{*2}\nonumber\\
\!(\!4\!+\!2\varphi_2^*\!+\!4\varphi_1^*\!+\!\varphi_1^*\varphi_2^*)\!+\!  \varphi_1^*\varphi_2^*\!&\ge\! (\!4\!+\!2\varphi_2^*\!\!+\!4\varphi_1^*\!+\!\varphi_1^*\varphi_2^*)\!+\!\varphi_1^{*2}\!\nonumber\\
2(1+\varphi_1^*)(2+\varphi_2^*)&\ge(2+\varphi_1^*+\varphi_2^*)(2+\varphi_1^*)\nonumber\\
\frac{(1+\varphi_1^*)(2+\varphi_2^*)}{2+\varphi_1^*}&\ge\frac{2+\varphi_1^*+\varphi_2^*}{2}\nonumber\\
\frac{(1+\varphi_1^*)(2+\varphi_2^*)}{2+\varphi_1^*}&\ge\frac{(1\!+\!\varphi_1^*)+(1\!+\!\varphi_2^*)}{2}.
\end{align}
Notice that the left term matches the expression for $\tilde{\gamma}|_{\mu=0}$ given in \eqref{bmu0} for $\beta=1/2$ and using the inequality between the arithmetic and geometric means we have that $\frac{(1\!+\!\varphi_1^*)+(1\!+\!\varphi_2^*)}{2}\ge \sqrt{(1\!+\!\varphi_1^*)(1\!+\!\varphi_2^*)}$, thus,
\begin{align}
\tilde{\gamma}|_{\mu=0}\ge \sqrt{(1\!+\!\varphi_1^*)(1\!+\!\varphi_2^*)},
\end{align}
where the right term matches the performance of the OMA setup. Therefore, for $\mu=0$ the NOMA scheme is the best choice.

For $\mu>0$ we perform some simplifications to gain in mathematical tractability. 
We can establish that the NOMA scheme will overcome \textit{almost surely} the OMA configuration if
\begin{align}
\Big[1+\frac{1}{2}\bar{\gamma}\Big]^2&>1+\frac{\varphi_1^*+\varphi_2^*}{2}\nonumber\\
\Big[1+\frac{\frac{1}{2}\varphi_1^*}{2+\varphi_1}+\frac{\frac{1}{2}\varphi_2^*}{2+\mu\varphi_2^*}\Big]^2&>1+\frac{\varphi_1^*+\varphi_2^*}{2},\label{eqMu}
\end{align}
where the first line comes from using \eqref{gamB} and using the inequality relating the arithmetic and geometric means for the right term ($\sqrt{(1+\varphi_1^*)(1+\varphi_2^*)}$, which is the achievable rate with OMA, is greater than or equal to $1+\frac{\varphi_1^*+\varphi_2^*}{2}$), and the second line comes from using \eqref{gamB2}. Now, solving \eqref{eqMu} as a function of $\mu$ we attain \eqref{cor4}.  \hfill\qedsymbol

\bibliographystyle{IEEEtran}
\bibliography{IEEEabrv,references}
\end{document}